\newtheorem{thm}{Theorem}[subsection]
\newtheorem{prop}[thm]{Proposition}
\newtheorem{cor}{Corollary}[thm]
\newtheorem{lem}[thm]{Lemma}
\theoremstyle{definition}
\newtheorem{defn}[thm]{Definition}
\newcommand{\ol}[1]{\overline{#1}}
\newcommand{\wt}[1]{\widetilde{#1}}
\newcommand{\wh}[1]{\widehat{#1}}
\newcommand{\vocab}[1]{\textbf{#1}}
\newcommand{\bbr}{\mathbb R}
\newcommand{\bbz}{\mathbb Z}
\newcommand{\bbn}{\mathbb N}
\newcommand{\bbc}{\mathbb C}
\newcommand{\diff}{\backslash}
\renewcommand{\Im}{\text{Im}}
\newcommand{\adots}{\reflectbox{$\ddots$}}
\newcommand{\sheaf}[1]{\mathcal{#1}}
\newcommand{\supp}{\textnormal{supp}}
\newcommand{\grad}{\nabla}
\newcommand{\Ro}{\text{Ro}}
\newcommand{\Fr}{\text{Fr}}
\newcommand{\Bu}{\text{Bu}}
\title{A Connection Between Orthogonal Polynomials and Shear Instabilities in the Quasi-geostrophic Shallow Water Equations}
\author{W.R. Casper}
\begin{document}

\maketitle
\begin{abstract}
In this paper we demonstrate a connection between the roots of a certain sequence of orthogonal polynomials on the real line and the linear instability of a $x$-directionally homogeneous background velocity profile $u^b(x,y) = \cos(y)$ in the quasi-geostrophic shallow water (QG) equation in a domain with periodic boundaries in the $y$-direction.  Using the relationship we establish, we then prove that there exists a unique unstable mode for each horizontal wave number $0<k<1$ and provide mathematically rigorous estimates of the associated growth rate.
\end{abstract}

\section{Introduction}
In this paper, we obtain for each wave number $k>0$ rigorous bounds for the eigenvalues of the (modified) Rayleigh stability equation
\begin{equation}\label{Rayleigh equation}
f''(y)-\left(k^2 + \frac{u^b(y)''- c/\Bu}{u^b(y)-c}\right)f(y) = 0.
\end{equation}
with periodic boundary conditions $f(y + 2\pi) = f(y)$ in the case that the background velocity field $u^b(y) = \cos(y)$.  Here by eigenvalues, we refer to the values of $c$ (for fixed $k$) for which the Rayleigh equation has a square-integrable, periodic solution.

The modified version of Rayleigh's equation above arises in the study of the linear stability of shear flows in the inviscid, incompressible shallow water equations in the limit of small Rossby number $\Ro$.  The quasi-geostrophic shallow water equation (QG) is given by
\begin{equation}\label{QG equation}
q_t + \psi_xq_y - \psi_yq_x = 0
\end{equation}
where $\psi = \psi(x,y,t)$ is the stream function and $q = q(x,y,t)$ is the potential vorticity, related by $q = \Delta\psi - \psi/\Bu$.  Linearizing the QG equation around a shear background stream function $\psi^b = \psi^b(y)$, we obtain the linear partial differential equation for the perturbed stream function $\psi^p$
$$\left(\Delta-\frac{1}{\Bu}\right)\psi^p_t + (\psi^b)'''\psi_x^p - (\psi^b)'\Delta\psi_x^p = 0.$$
Note that the background stream function determines a background velocity $u^b = - (\psi^b)'$.  If $f(y)$ is a solution to Equation \ref{Rayleigh equation} with this value of $u^b$ for some pair $(k,c)$, then $\psi^p(x,y,t) = e^{ik(x-ct)}f(y)$ is a solution of the linearized QG equation.  We derive the QG equation from the shallow water equation in the limit of small $\Ro$ in Appendix \ref{QG derivation} below.  We derive the linearized QG equations and the Rayleigh equation in Appendix \ref{Rayleigh derivation}.

For a given value of $k$, there will in general be countably (and often finitely many) values of $c$ for which Equation \ref{Rayleigh equation} will have a square-integrable, periodic solution (ie. for which $c$ is an eigenvalue).  In this way the choice of background profile $u^b(y)$ determines a dispersion relation, ie. a relationship between (complex) frequencies $c$ and wave numbers $k$, which we can represent as a multi-valued function $c(k)$.  Note that if $c$ is a value of $c(k)$, then so too is $\ol c$.  A wave number $k$ is called unstable if one of the values of $c(k)$ is nonreal, and in this case an associated perturbed solution $\psi^p$ of the linear QG equation grows exponentially with growth rate $k\cdot\Im(c(k))$.

For some very special background velocity profiles $u^b(y)$, the solutions of Equation \ref{Rayleigh equation} may be determined explicitly analytically and the dispersion relation thereby determined also.  However, for the vast majority of profiles this is not the case.  Instead, numerical methods of determining the dispersion relation $c(k)$ are required.  One popular method is to replace the differential operators in Rayleigh's equation with approximations in the form of finite-dimensional linear operators acting on a finite-dimensional vector space.  To do so, we can replace the interval $[0,2\pi]$ with a finite grid, and the differential operators with difference operators on this grid \cite{wang2012ageostrophic}\cite{menesguen2012ageostrophic}\cite{gula2010instabilities}.  Alternatively, we can expand $f$ in terms of a orthonormal basis for $[0,2\pi]$ and take a finite truncation \cite{dolph1958application}\cite{gallagher1962behaviour}\cite{orszag1971accurate}.  Either way, this replaces Equation \ref{Rayleigh equation} with a simple eigenvalue problem on a finite-dimensional vector space, and we can imagine that as the accuracy of our approximation is increased that the scattering relations obtained by the various approximations will converge to the true scattering relation $c(k)$.

This presents us with a problem.  We have to try to tell which of the eigenvalues of the various approximations are also approximations of the eigenvalues of Equation \ref{Rayleigh equation}.  This problem becomes even more apparent for the wide class of background velocity profiles $u^b(y)$ for which Equation \ref{Rayleigh equation} has finitely many eigenvalues for each fixed value of $k$.  As the precision of our approximations increases, so too does the dimension of the linear system approximating Rayleigh's equation, resulting in an ever increasing amount of eigenvalues.  Even worse, we have no explicit estimates of the rate of convergence.

It is useful to rephrase this problem in the language of differential operators.  Consider the Schr\"{o}dinger operator $L_c$ which acts as an unbounded operator on (a dense subset of) the Hilbert space $\sheaf H = L^2([0,2\pi])$ of square-integrable functions on the interval $[0,2\pi]$ by
\begin{equation}\label{Schroodinger operator}
L_c[f] := f''(y) - \left(\frac{u^b(y)'' - c/\Bu}{u^b(y)-c}\right)f(y).
\end{equation}
for all $f$ in the domain $\sheaf D(L_c)$ of $L_c$
$$\sheaf D(L_c) := \left\lbrace f\in\sheaf H: \text{$f''(y)$ exists and }f''(y) - \left(\frac{u^b(y)'' - c/\Bu}{u^b(y)-c}\right)f(y)\in\sheaf H\right\rbrace.$$

The spectrum of any unbounded operator $L$ on a Hilbert space $\sheaf H$ with domain $\sheaf D(L)$ is composed of three parts: a discrete component $\sigma_d(L)$, an continuous component $\sigma_c(L)$, and a singular component $\sigma_s(L)$.  Points in each component of the spectrum are characterized as follows:
\begin{align*}\label{spectral decomposition}
\sigma_d(L) &= \{\lambda\in \bbc: \text{$L-\lambda I$ is not injective}\}\\
\sigma_c(L) &= \{\lambda\in \bbc: \text{$L-\lambda I$ is injective, $\sheaf R(L-\lambda I)$ is a dense, proper subset of $\sheaf H$}\}\\
\sigma_s(L) &= \{\lambda\in \bbc: \text{$L-\lambda I$ is injective,  $\sheaf R(L-\lambda I)$ is not dense in $\sheaf H$}\}
\end{align*}
where in the above $\sheaf R(L-\lambda I)$ denotes the range of $L-\lambda I$.  With this in mind, the question of determining the eigenvalues of \ref{Rayleigh equation} for all values of $k$ is equivalent to determining for which values of $c$ the operator $L_c$ has a positive eigenvalue $k^2$ in its discrete spectrum.

In this paper, we consider the shear background profile $u^b(y)=\cos(y)$.  This profile is complicated enough that the solution to Equation \ref{Rayleigh equation} cannot be obtained analytically.  However, we will show that explicit and rigorous estimates for the dispersion relation above can be made.  Our method for estimating the eigenvalues of Rayleigh's equation \ref{Rayleigh equation} for a background cosine profile is based on relating the eigenvalues to the roots of a sequence of orthogonal polynomials for a certain measure defined on the real line $\bbr$.  Roots of orthogonal polynomials satisfy an interlacing property which we apply to determine the number of eigenvalues and obtain monotonic sequences whose limits are the desired eigenvalues.  Specifically we prove the following
\begin{thm}\label{main theorem}
Let $b_0 = 2z_0^2 + z_1^2$, $b_j = z_{2j}^2 + z_{2j+1}^2$ for $j\geq 1$, and $a_j = z_{2j+1}z_{2j+2}$ for all $j\geq 0$, where
$$z_j = \frac{1}{2}\left(\frac{(j^2+k^2-1)((j+1)^2+k^2-1)}{(j^2+k^2 + 1/\Bu)((j+1)^2+k^2 + 1/\Bu)}\right)^{1/2},$$
and let $p_{-1}(x)=0,p_0(x) = 1$ and define $p_{n+1}(x)$ recursively for $n\geq 0$ by
$$xp_n(x) = a_np_{n+1}(x) + b_np_n(x) + a_{n-1}p_{n-1}(x).$$
Take $u^b(y)=\cos(y)$ in Rayleigh's equation \ref{Rayleigh equation}.  Then the following holds
\begin{enumerate}[(a)]
\item  Rayleigh's equation \ref{Rayleigh equation} has complex eigenvalues if and only if $|k|<1$
\end{enumerate}
Furthermore, assuming $|k|<1$, the following is true
\begin{enumerate}[(a)]
\setcounter{enumi}{1}
\item  For $n$ large enough, the polynomial $p_n(x)$ has a unique negative root $-r_n$.
\item  The sequence of positive real number $r_1,r_2,\dots$ from (b) is monotone increasing and converges to a real number $r$.
\item  The complex eigenvalues of Rayleigh's equation \ref{Rayleigh equation} are given by $\pm i\sqrt{r}$.
\end{enumerate}
\end{thm}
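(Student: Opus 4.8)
The plan is to pass to Fourier coefficients, turn Rayleigh's equation into a three-term recurrence, and recognize the resulting operator as a (complex-symmetric) Jacobi operator whose self-adjointness is governed precisely by the sign of $k^2-1$. Writing $u^b=\cos y$, so $u^b{}''=-\cos y$, and clearing the denominator $\cos y-c$ (which is nonvanishing whenever $c\notin[-1,1]$, in particular for any non-real $c$), the equation becomes $(\cos y-c)f''+[(1-k^2)\cos y+c(k^2+1/\Bu)]f=0$. Substituting $f=\sum_n\hat f_ne^{iny}$ and collecting the coefficient of $e^{iny}$ gives, after the similarity normalization $\hat f_n=(\alpha_n\beta_n)^{-1/2}g_n$ with $\alpha_j=j^2+k^2-1$ and $\beta_j=j^2+k^2+1/\Bu$, the symmetric recurrence $z_{n-1}g_{n-1}+z_ng_{n+1}=c\,g_n$; that is, $c$ is an eigenvalue iff $c$ lies in the point spectrum of the tridiagonal operator $\tilde J$ on $\ell^2(\bbz)$ with zero diagonal and off-diagonals $z_j$. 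Since $\alpha_j>0$ for $j\ge1$ while $\alpha_0=k^2-1$, each $z_j$ with $j\ge1$ is real and only $z_0$ (with $z_{-1}=z_0$) can be imaginary: $z_0^2=\tfrac14\alpha_0\alpha_1/(\beta_0\beta_1)$ has the sign of $k^2-1$. When $|k|\ge1$ all $z_j$ are real and $\tilde J$ is self-adjoint; a complex eigenvalue $c$ is non-real, so the reduction is valid and forces $c\in\sigma_p(\tilde J)\subset\bbr$, a contradiction. This is the easy direction of (a).

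Next I would use the symmetry $y\mapsto-y$ to split solutions into even ($g_{-n}=g_n$) and odd ($g_{-n}=-g_n$) classes and fold each to $\bbn_0$. The odd class yields a semi-infinite Jacobi operator built only from $z_1,z_2,\dots$, all real, hence self-adjoint with real spectrum, contributing no complex eigenvalues. The even class folds to an operator $\mathcal J$ whose only non-symmetric feature is the corner pair $(\mathcal J_{01},\mathcal J_{10})=(2z_0,z_0)$; conjugating by $\mathrm{diag}(\sqrt2,1,1,\dots)$ makes it the complex-symmetric tridiagonal matrix with first off-diagonal $\sqrt2\,z_0$ and subsequent off-diagonals $z_1,z_2,\dots$. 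Because this operator has zero diagonal it interchanges index parities, so its square preserves them, and a direct computation identifies the odd-index block of $\mathcal J^2$ with exactly the real symmetric Jacobi matrix $J$ of the theorem (diagonal $b_j$, off-diagonal $a_j$). Tracking eigenvectors through the squaring — for $c\neq0$ an eigenvector of $\mathcal J$ at $c$ restricts to eigenvectors of both blocks at $c^2$, and conversely an eigenvector of $J$ at $\mu$ reconstructs one of $\mathcal J$ at each square root of $\mu$ — produces the central dictionary: the complex eigenvalues of Rayleigh's equation are exactly the numbers $\pm\sqrt{\mu}$ with $\mu$ a negative eigenvalue of $J$. It then remains to show $J$ has exactly one negative eigenvalue $-r$ for $0<|k|<1$, and to locate $r$ via the $p_n$.

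For $J$ I would first record $a_j\to\tfrac14$ and $b_j\to\tfrac12$ (as $z_j\to\tfrac12$), so $J$ is a compact perturbation of the free Jacobi matrix and, by Weyl's theorem, $\sigma_{\mathrm{ess}}(J)=[0,1]$; thus any spectrum below $0$ is discrete and carried by an exponentially decaying $\ell^2$ eigenvector. The count of negative eigenvalues I would control through the completion of squares $\langle Jw,w\rangle=2z_0^2w_0^2+\sum_{j\ge0}(z_{2j+1}w_j+z_{2j+2}w_{j+1})^2$. Uniqueness is then immediate: on the codimension-one subspace $\{w_0=0\}$ the form is nonnegative, so $J$ has at most one negative eigenvalue and $\lambda_2(J)\ge0$. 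Existence uses $z_0^2<0$ (equivalent to $0<|k|<1$): normalizing $w_0=1$, the remaining sum is a discrete Dirichlet energy whose weights $z_{2j+1}^2$ stay bounded below, so spreading the unit value over many sites drives that energy to $0$; taking it below $2|z_0|^2$ makes $\langle Jw,w\rangle<0$, whence $\lambda_1(J)=-r<0$. Hence $J$ has a unique negative eigenvalue $-r$, isolated below $\sigma_{\mathrm{ess}}$.

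Finally I would tie $-r$ to the roots of $p_n$. Since every $a_j>0$, the $p_n$ are genuine orthogonal polynomials with $\det(xI-J_n)\propto p_n(x)$ for the $n\times n$ truncation $J_n$, so their roots are the real, simple eigenvalues of $J_n$ and interlace in $n$, forcing the smallest root to decrease. Min–max for compressions gives $\lambda_1(J_n)\ge\lambda_1(J)=-r$ and $\lambda_2(J_n)\ge\lambda_2(J)\ge0$, while testing against the decaying eigenvector yields $\lambda_1(J_n)\downarrow\lambda_1(J)$. Therefore, for $n$ large, $J_n$ has exactly one negative eigenvalue $-r_n$, the sequence $r_n$ increases to $r$, and $p_n$ has the unique negative root $-r_n$, establishing (b), (c) and $r=\lim r_n$. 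Pulling $-r$ back through the dictionary gives the complex eigenvalues $\pm i\sqrt r$, and together with the $|k|\ge1$ computation this completes (a) and (d). I expect the main obstacle to be the exact-count step for $J$: both halves of the completion-of-squares argument are delicate — uniqueness because it must hold for the full operator rather than a truncation, and existence because the naive null vector of the square terms fails to be $\ell^2$, forcing the Dirichlet-energy argument that exploits the non-decay of the weights together with the strict negativity $z_0^2<0$ available precisely when $0<|k|<1$.
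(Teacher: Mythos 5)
Your reduction is step-for-step the paper's own: your $\tilde J$ is the paper's bi-infinite matrix $B$, your even/odd folding is its $\sigma$-symmetry lemma, and the odd-index block of $\mathcal J^2$ is exactly its Jacobi matrix $A$ (your $J$). The genuine gap is your very first ``iff.'' An eigenvalue of Rayleigh's equation supplies only $\hat f\in\ell^2(\bbz)$, while your gauge change $g_n=(\alpha_n\beta_n)^{1/2}\hat f_n$ multiplies by a weight growing like $n^2$; so $g\in\ell^2(\bbz)$ is not automatic, and without it the forward direction of your dictionary --- hence the ``only if'' half of (a) and the exhaustiveness of $\pm i\sqrt r$ in (d) --- is unproved. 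This transfer is precisely what the paper's Proposition~\ref{Jacobi connection} exists to handle: there the formal eigenvector of $A$ at $c^2$ is identified with $(p_j(c^2))_j$, and the exponential growth of orthogonal polynomials off $\supp(\mu)$ (Theorem~\ref{growth rate}) is played off against the polynomially growing weight to force $c^2\in\sigma_d(A)$. Your route admits a much cheaper repair --- for non-real $c$ the potential $V=k^2+((u^b)''-c/\Bu)/(u^b-c)$ is smooth and bounded, so $f''=Vf\in L^2$ gives $n^2\hat f_n\in\ell^2$ at once --- but some such argument must be made, and you make none.

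Past that point your argument is correct and genuinely diverges from the paper, to its advantage. The paper proves ``at most one negative root'' by a detour back through the PDE: Corollary~\ref{interlacing corollary} would otherwise force infinitely many negative spectral points and hence infinitely many Rayleigh eigenvalues, contradicting a finiteness claim argued via meromorphy of the Stieltjes transform and the centered Howard semicircle theorem; and the existence of a negative root is dismissed with ``one may verify.'' Your completion of squares
\begin{equation*}
\langle Jw,w\rangle=2z_0^2w_0^2+\sum_{j\ge0}\bigl(z_{2j+1}w_j+z_{2j+2}w_{j+1}\bigr)^2
\end{equation*}
is correct (it is the bilinear form $\langle\mathcal J^2\tilde w,\tilde w\rangle$ with $w$ placed on the odd sites) and yields both halves self-containedly: nonnegativity on $\{w_0=0\}$ plus min--max gives at most one negative eigenvalue of $J$ and of every truncation $J_n$, while for $0<|k|<1$ cutting off the telescoping null vector over $N$ sites drives the square sum to $0$ like $1/N$, giving existence; min--max for compressions plus a trial-vector argument then gives $r_n\uparrow r$ where the paper invokes root interlacing. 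Two quibbles: the spreading argument needs the Dirichlet weights bounded \emph{above} (non-summable reciprocals), not ``bounded below'' --- true here since $z_j\to\tfrac12$ and the null vector tends to a constant in modulus --- and your existence step silently requires $k\ne0$ (at $k=0$ one has $z_0=0$ and indeed $a_0=0$), an edge case the theorem's hypothesis $|k|<1$ glosses over as well.
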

The most significant part of Theorem \ref{main theorem} is that the roots of the polynomials $p_n(x)$ converge \emph{monotonically}.  Therefore for each $n$, we get a new, sharper lower bound for the growth rate of instabilities.

\section{General Results on Rayleigh's Equation}
Before diving into some of the mathematical background on orthogonal polynomials used in this paper, it makes sense to recount some of the more basic results known for the Rayleigh equation.  However, since Equation \ref{Rayleigh equation} is not exactly the Rayleigh equation, but a modified version, these results will change in some important ways.

Many general results on Rayleigh's equation involve finding criteria for the existence of non-real eigenvlaues, and bounds for the growth rates of the associated unstable linear modes.  We are dealing with a modified version of Rayleigh's equation however, because of the presence of the Burger number factor, and this modifies some of the usual instability criteria in interesting ways.  For example, a well-known criterion for instability is Rayleigh's inflection point criterion, which says that for a smooth, shear profile to be linearly unstable, it must have an inflection point.  We see in the next theorem, that we no longer need an inflection point if the Burger number is small enough.
\begin{thm}[Rayleigh's Inflection Point Criterion]
Let $u^b(y)$ be twice differentiable with continuous second derivative.  Suppose that for fixed $k$, Equation \ref{Rayleigh equation} has a non-real eigenvalue $c$.  Then there exists a point $y_0$ satisfying
$$u^b(y_0)''-u^b(y_0)/\Bu = 0.$$
\end{thm}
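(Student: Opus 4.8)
The plan is to run the classical energy argument for Rayleigh's equation while tracking carefully how the extra term $-c/\Bu$ in the numerator alters the conclusion. First I would let $f$ be a square-integrable periodic solution of Equation \ref{Rayleigh equation} for the non-real value $c$. Because $c$ is non-real and $u^b$ is real, the coefficient $((u^b)'' - c/\Bu)/(u^b - c)$ has no singularities on $[0,2\pi]$ (the denominator $u^b - c$ never vanishes, since its imaginary part is $-\Im(c)\neq 0$), so $f$ is smooth enough that the manipulations below are justified. Multiplying Equation \ref{Rayleigh equation} by the complex conjugate $\overline{f}$, integrating over one period $[0,2\pi]$, and integrating the $\overline{f}f''$ term by parts, the periodic boundary conditions kill the boundary contribution and I obtain
$$-\int_0^{2\pi}|f'|^2\,dy - k^2\int_0^{2\pi}|f|^2\,dy - \int_0^{2\pi}\frac{(u^b)'' - c/\Bu}{u^b - c}\,|f|^2\,dy = 0.$$
The first two integrals are real, so taking imaginary parts isolates the last term.

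The crux of the argument is the algebraic computation of the imaginary part of the integrand factor. Using that $u^b$ and $(u^b)''$ are real, a direct calculation gives
$$\Im\left(\frac{(u^b)'' - c/\Bu}{u^b - c}\right) = \frac{\Im(c)\,\left((u^b)'' - u^b/\Bu\right)}{|u^b - c|^2}.$$
This is precisely where the modified numerator makes itself felt: the cross terms combine so that the surviving factor is $(u^b)'' - u^b/\Bu$ rather than simply $(u^b)''$, which is the source of the Burger-number-corrected criterion. Taking the imaginary part of the displayed energy identity therefore yields
$$\Im(c)\int_0^{2\pi}\frac{(u^b)'' - u^b/\Bu}{|u^b - c|^2}\,|f|^2\,dy = 0.$$

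Finally, since $c$ is non-real I may divide by $\Im(c)\neq 0$. The weight $|f|^2/|u^b - c|^2$ is non-negative, continuous, and not identically zero (as $f$ is an eigenfunction), so the continuous function $(u^b)'' - u^b/\Bu$ cannot be of one strict sign throughout $[0,2\pi]$; otherwise the integral would be strictly positive or strictly negative. By the intermediate value theorem it must vanish at some point $y_0$, giving $(u^b)''(y_0) - u^b(y_0)/\Bu = 0$, as desired. I expect the only genuinely delicate point to be the imaginary-part computation, since this is exactly what replaces the classical inflection-point condition $(u^b)'' = 0$ by the modified condition $(u^b)'' = u^b/\Bu$; the regularity and boundary-term justifications are routine once one observes that a non-real $c$ keeps the coefficient bounded on the whole period.
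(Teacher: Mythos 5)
Your proposal is correct and follows essentially the same route as the paper: multiply by $\overline{f}$, integrate by parts using periodicity, take the imaginary part to isolate the weight $c_i\bigl((u^b)'' - u^b/\Bu\bigr)/|u^b-c|^2$, and conclude via the intermediate value theorem since $c_i \neq 0$. Your write-up simply makes explicit the imaginary-part computation and the sign argument that the paper's terse proof leaves implicit.
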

\begin{proof}
Let $f(y)$ be an eigenvector for the eigenvalue $c$.  Then multiplying Equation \ref{Rayleigh equation} by $f^*(y)$, integrating by parts, and taking the imaginary part of the resultant identity, we find
$$c_i\int_0^{2\pi} \frac{u^b(y)''-u^b(y)/\Bu}{(u^b(y)-c_r)^2 + c_i^2}|f(y)|^2 dy = 0.$$
Since $c_i\neq 0$, the statement of our theorem is follows from the intermediate value theorem.
\end{proof}

We show in this paper that $u^b(y) = \cos(y)$ is an unstable profile.  Yet, by the above theorem $u^b(y) = \cos(y) + 2$ is a stable profile for $\Bu < 1$.  This is indicative of an important difference between Rayleigh's original equation and Equation \ref{Rayleigh equation}, namely that the choice of inertial reference frame matters.  This is a consequence of the fact that the modified equation was derived in a rotating reference frame.  We note that as the rate of rotation is decreased to $0$, the Burger number increases to $\infty$ and Rayleigh's equation takes it's traditional form; in particular the choice of reference frame no longer matters.

Another well-known result is Howard's semicircle theorem, which states that for a background velocity profile taking values in the finite interval $[u_{min},u_{max}]$, the eigenvalues of the equation corresponding to unstable modes occur in a circle of radius $(u_{max}-u_{min})/2$ centered at $(u_{max}+u_{min})/2$.  However, it seems to be the case that Howard's semicircle theorem as stated does not hold for the modified Rayleigh equation.  This is again due to the fact that the choice of reference frame matters.  Instead, we have a modified form, which turns out to be equivalent to Howard's semicircle theorem in the limit $\Bu\rightarrow\infty$.
\begin{thm}[Centered Howard's Semicircle Theorem]
Let $u^b(y)$ be twice differentiable with continuous second derivative, with $u^b(y)\in [-r,r]$.  Suppose furthermore that for fixed $k$, Equation \ref{Rayleigh equation} has an eigenvalue $c$.  Then $c$ lies in the complex plane within a circle of radius $r$ of the origin.
\end{thm}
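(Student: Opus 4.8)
The plan is to adapt the standard proof of Howard's semicircle theorem to the modified Rayleigh equation. First I would introduce the substitution $f(y) = (u^b(y)-c)^{1/2} g(y)$, which is the classical trick that converts the Rayleigh-type equation into a self-adjoint form in which the singular term $u^b/(u^b-c)$ is absorbed. Writing $F = u^b - c$, after this substitution and an integration by parts over the periodic interval $[0,2\pi]$, the equation should yield an identity of the form
\begin{equation}
\int_0^{2\pi}\left(|g'|^2 + k^2|g|^2 + \tfrac{1}{4}\frac{|(u^b)'|^2}{|F|^2}\,\tfrac{?}{}\right)|F|\,dy + \tfrac{1}{\Bu}\int_0^{2\pi}\frac{c}{F}|f|^2\,dy = 0,
\end{equation}
so the real work is bookkeeping to see how the Burger term contributes. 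The key point is that the extra $c/\Bu$ in the numerator, which broke the ordinary semicircle theorem, reappears in a controlled way once the substitution is made.

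Next I would separate the resulting integral identity into its real and imaginary parts. Writing $c = c_r + ic_i$ with $c_i \neq 0$ (the unstable case; for real $c$ the conclusion $|c|\le r$ is a separate and easier matter handled by the range of $u^b$), the imaginary part gives one relation and the real part gives another. In the classical argument one multiplies the imaginary-part identity by suitable constants and combines it with the real part to produce an inequality of the form $\int Q\,|g|^2\,d\mu \ge 0$ where $Q = (u^b - c_r)^2 + c_i^2 - (\text{something})$, and this is what pins $c$ into a disk. Here I expect the combination to produce an inequality asserting that $\int \big(|c|^2 - u^b\,\overline{?}\big)(\cdots)\,dy$ has a definite sign; using $u^b \in [-r,r]$ so that $(u^b)^2 \le r^2$ pointwise, I would conclude $|c|^2 = c_r^2 + c_i^2 \le r^2$, i.e. $c$ lies in the disk of radius $r$ centered at the origin rather than at the midpoint $(u_{\max}+u_{\min})/2$ as in the classical statement.

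The main obstacle, and the place where the Burger number genuinely changes the argument, is the cross term coming from $\tfrac{1}{\Bu}\int \tfrac{c}{F}|f|^2\,dy$. In the classical case ($\Bu \to \infty$) this term vanishes and the estimate centers the disk at the mean of the velocity; with it present, I must show that the $c/\Bu$ contribution shifts the center all the way to the origin. Concretely, I would expand $\tfrac{c}{F} = \tfrac{c(\overline{u^b - c})}{|F|^2} = \tfrac{c\,u^b - |c|^2}{|F|^2}$, take imaginary and real parts, and verify that the $-|c|^2$ piece is exactly what is needed to replace the recentering. I anticipate this is where one must be careful that the inflection-point identity from the previous theorem (the vanishing of $c_i\int \tfrac{(u^b)''-u^b/\Bu}{|F|^2}|f|^2\,dy$) is used to eliminate an otherwise-unwanted term before the sign of the final inequality can be read off. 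Once that cancellation is confirmed, the bound $|c| \le r$ follows from the pointwise estimate $(u^b)^2 \le r^2$ and positivity of the weight $|f|^2/|F|^2$.
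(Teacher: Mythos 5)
There is a genuine gap here: you have chosen the wrong substitution, and the argument cannot be completed along the route you describe. The change of variables $f = (u^b-c)^{1/2}g$ is the classical device behind Howard's \emph{growth-rate} estimates (bounds of the form $kc_i \le \tfrac{1}{2}\max|(u^b)'|$), not the semicircle theorem. As your own partial identity already shows, it injects the term $\tfrac{1}{4}\int ((u^b)')^2\,|u^b-c|^{-2}|g|^2\,dy$ into the imaginary part and a term $\tfrac{1}{2}\int (u^b)''\,|g|^2\,dy$ into the real part. These involve the \emph{derivatives} of $u^b$, which the hypothesis $u^b(y)\in[-r,r]$ does not control at all, so no combination of your real and imaginary parts can yield a bound $|c|\le r$ depending only on the range of $u^b$. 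Structurally, a semicircle-type conclusion needs $|c|^2 = c_r^2+c_i^2$ to appear \emph{linearly} against a positive weight, and with the square-root substitution it never does ($c$ enters only through $u^b-c$ and through the denominators $|u^b-c|^2$). You have also misattributed the substitution's effect: it is not $(u^b-c)^{1/2}$ that absorbs the singular term $((u^b)''-c/\Bu)/(u^b-c)$, but the other classical substitution $g = f/(u^b-c)$. Finally, your hope of invoking the inflection-point identity to cancel the unwanted terms does not pan out: with your normalization that identity carries the weight $|f|^2/|u^b-c|^2 = |g|^2/|u^b-c|$, which does not match the weight $|g|^2$ of the $\int (u^b)''|g|^2\,dy$ term, and the paper's proof never needs it.

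The paper's proof (following Howard's original argument) sets $g = f/(u^b-c)$, which turns Equation \ref{Rayleigh equation} into $((u^b-c)^2 g')' - \left(k^2(u^b-c)^2 - c(u^b-c)/\Bu\right)g = 0$, with no $(u^b)'$ or $(u^b)''$ anywhere. Multiplying by $g^*$, integrating by parts over the period, and taking imaginary and real parts of the resulting identity (for $c_i\neq 0$) gives the two moment identities
\begin{align*}
\int_0^{2\pi} u^b\, Q\,dy &= c_r\int_0^{2\pi} Q\,dy + \frac{1}{2\Bu}\int_0^{2\pi}u^b|g|^2\,dy,\\
\int_0^{2\pi} (u^b)^2 Q\,dy &= |c|^2\int_0^{2\pi} Q\,dy + \frac{1}{\Bu}\int_0^{2\pi}(u^b)^2|g|^2\,dy,
\end{align*}
where $Q = |g'|^2 + (k^2+1/\Bu)|g|^2 > 0$. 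The second identity finishes the proof in one line: since $(u^b)^2\le r^2$ pointwise, it gives $(r^2-|c|^2)\int_0^{2\pi} Q\,dy \ge \frac{1}{\Bu}\int_0^{2\pi}(u^b)^2|g|^2\,dy \ge 0$, hence $|c|\le r$. Note that the Burger contribution enters this identity with a \emph{favorable} sign — that is exactly why the disk is centered at the origin rather than at $(u_{\max}+u_{\min})/2$ — whereas in your scheme the Burger term remained the piece you could not control.
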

\begin{proof}
Make the substitution $g(y) = f/(u^b-c)$.  With this, Rayleigh's equation becomes
$$((u^b-c)^2g'(y))' - \left(k^2(u^b-c)^2-\frac{c(u^b-c)}{\Bu}\right)g(y) = 0.$$
Multiplying both sides of this equation by $g^*(y)$ and integrating by parts, we obtain
$$-\int_0^{2\pi} (u^b-c)^2 (|g'(y)|^2+k^2|g(y)|^2)dy + \frac{c}{\Bu}\int_0^{2\pi}(u^b-c)|g(y)|^2dy = 0.$$
Taking real and imaginary parts and simplifying with $Q(y) = |g'(y)|^2 + (k^2+1/\Bu)|g(y)|^2$, we obtain
$$\int_0^{2\pi} u^b(y)Q(y)dy = c_r\int_{0}^{2\pi}Q(y)dy +  \frac{1}{2\Bu}\int_0^{2\pi}u^b|g(y)|^2dy$$
$$\int_0^{2\pi} u^b(y)^2Q(y)dy = (c_r^2+c_i^2)\int_0^{2\pi}Q(y)dy + \frac{1}{\Bu}\int_0^{2\pi}u^b(y)^2|g(y)|^2dy$$
where we have written $c$ in terms of its real and imaginary components $c = c_r + ic_i$.
This latter equation in particular says that 
$$(r^2-|c|^2)\int_0^{2\pi} Q(y)dy \geq \frac{1}{\Bu}\int_0^{2\pi}u^b(y)^2|g(y)|^2dy$$
In particular, $|c|^2\leq r^2$.
\end{proof}

\section{Jacobi Matrices, Polynomials, and Measures}
Next, we will provide a brief summary of the relation between orthogonal matrix polynomials and Jacobi matrices.
\begin{defn}
A \vocab{Jacobi matrix} is an infinite or finite square, tri-diagonal matrix of the form
\begin{equation}\label{Jacobi matrix}
\left[\begin{array}{ccccc}
b_0 & a_0 &  0  &  0  & \dots\\
a_0 & b_1 & a_1 &  0  & \dots\\
 0  & a_1 & b_2 & a_2 & \dots\\
 0  &  0  & a_2 & b_2 & \dots\\
 \vdots &  \vdots  & \vdots & \vdots & \ddots\\
\end{array}\right]
\end{equation}
for some sequences of complex numbers $a_i,b_i\in\bbc$ with $a_i\neq 0$ for all $i$.
\end{defn}
\begin{defn}
Let $A$ be an $n\times n$ Jacobi matrix of the form \ref{Jacobi matrix}.  The \vocab{associated sequence of polynomials} is the sequence $p_0(x), p_1(x),\dots$ defined recursively by $p_0(x) = 1$, $p_1(x) = a_0^{-1}(x-b_0)$, and
$$xp_n(x) = a_np_{n+1}(x) + b_np_n(x) + a_{n-1}p_n(x),\ n\geq 1.$$
\end{defn}
The roots of the associated sequence of polynomials describe the eigenvalues of a finite Jacobi matrix, as stated in the next proposition.  This is proved in several places, including \cite{totik2005orthogonal}.
\begin{prop}
Let $A$ be an $n\times n$ Jacobi matrix of the form \ref{Jacobi matrix}, and let $p_0(x),\dots, p_n(x)$ be the associated sequence of polynomials.  Then the eigenvalues of $A$ are the roots $\lambda_1,\dots,\lambda_n$ of the polynomial $p_n(x)$.  The corresponding eigenspaces are given by
$$\sheaf E_{A}(\lambda_i) := \text{span}\{[p_0(\lambda_j),p_1(\lambda_j),\dots, p_{n-1}(\lambda_j)]^T\}.$$
\end{prop}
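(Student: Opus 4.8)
The plan is to match the three-term recurrence defining the polynomials against the eigenvalue equation for $A$, read off one row at a time. Writing a candidate eigenvector as $v = [v_0, v_1, \dots, v_{n-1}]^T$ with $v_j = p_j(\lambda)$, I would first expand $Av = \lambda v$ into its $n$ scalar equations. The top row reads $b_0 p_0(\lambda) + a_0 p_1(\lambda) = \lambda p_0(\lambda)$, which is exactly the definition $p_1(x) = a_0^{-1}(x - b_0)$; each interior row $j$ (for $1 \le j \le n-2$) reads $a_{j-1} p_{j-1}(\lambda) + b_j p_j(\lambda) + a_j p_{j+1}(\lambda) = \lambda p_j(\lambda)$, which is precisely the recurrence at index $j$. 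Thus the first $n-1$ rows hold identically for every $\lambda$.

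The one row that is not automatic is the last. The final row reads $a_{n-2} p_{n-2}(\lambda) + b_{n-1} p_{n-1}(\lambda) = \lambda p_{n-1}(\lambda)$, whereas the recurrence at index $n-1$ gives $\lambda p_{n-1}(\lambda) = a_{n-1} p_n(\lambda) + b_{n-1} p_{n-1}(\lambda) + a_{n-2} p_{n-2}(\lambda)$. Subtracting, the last row holds if and only if $a_{n-1} p_n(\lambda) = 0$; since $a_{n-1} \neq 0$ by the definition of a Jacobi matrix, this is equivalent to $p_n(\lambda) = 0$. Because $p_0(\lambda) = 1 \neq 0$, the vector $v$ is nonzero, so this already shows that every root $\lambda_i$ of $p_n$ is an eigenvalue of $A$ with $[p_0(\lambda_i), \dots, p_{n-1}(\lambda_i)]^T$ a genuine eigenvector.

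To see that these are all the eigenvalues and to pin down the eigenspaces, I would proceed in two short steps. First, the same row-by-row reading shows that any eigenvector is rigidly determined by its first coordinate: the top row forces $v_1$ in terms of $v_0$, and each interior row solves for $v_{j+1}$ in terms of $v_{j-1}, v_j$ using $a_j \neq 0$. Hence every eigenspace is at most one-dimensional, and since $p_0(\lambda_i) = 1$ the eigenvector produced above (normalized to first coordinate $1$) spans it, giving the claimed formula for $\mathcal{E}_A(\lambda_i)$. Second, to confirm that the root set and eigenvalue set coincide with multiplicity, I would verify by cofactor expansion along the last row that the characteristic polynomials $D_m(x) = \det(xI - A_m)$ of the leading $m \times m$ submatrices satisfy $D_m = (x - b_{m-1}) D_{m-1} - a_{m-2}^2 D_{m-2}$, the same recurrence — after the scaling $D_m = a_0 a_1 \cdots a_{m-1}\, p_m$ — obeyed by $p_m$; matching base cases then yields $\det(xI - A) = (a_0 \cdots a_{n-1}) p_n(x)$, so the eigenvalues of $A$ are exactly the roots of $p_n$.

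The main obstacle is not the recurrence-matching, which is routine bookkeeping, but rather being careful about the converse direction: the direct eigenvector construction only shows that the roots of $p_n$ are \emph{among} the eigenvalues, and one must separately rule out extra eigenvalues and control multiplicities. Establishing the characteristic-polynomial identity $\det(xI - A) = (a_0\cdots a_{n-1}) p_n(x)$ resolves this cleanly and simultaneously accounts for repeated roots, which the eigenvector argument alone does not handle.
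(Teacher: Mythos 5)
Your proof is correct, and it is worth noting that the paper does not actually prove this proposition at all --- it simply cites \cite{totik2005orthogonal} --- so your argument supplies a self-contained proof where the paper defers to the literature. Your two-pronged structure is exactly what is needed: the row-by-row matching shows that $[p_0(\lambda),\dots,p_{n-1}(\lambda)]^T$ satisfies the first $n-1$ rows of $(A-\lambda I)v=0$ identically and that the last row holds precisely when $a_{n-1}p_n(\lambda)=0$; the observation that $v_0=0$ forces $v=0$ (via $a_j\neq 0$) gives both the converse inclusion and the one-dimensionality of each eigenspace; and the continuant identity $D_m=(x-b_{m-1})D_{m-1}-a_{m-2}^2D_{m-2}$, with base cases $D_0=1=p_0$ and $D_1=x-b_0=a_0p_1$, yields $\det(xI-A)=(a_0\cdots a_{n-1})p_n(x)$ by induction, which settles algebraic multiplicities --- something the eigenvector argument alone cannot do, as you correctly point out. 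Two small remarks: first, for an $n\times n$ matrix the entry $a_{n-1}$ does not appear in $A$ (only $a_0,\dots,a_{n-2}$ do), so $p_n$ is defined using an auxiliary nonzero $a_{n-1}$ supplied by the paper's definition of the sequences; this is harmless since the root set of $p_n$ is unchanged under rescaling by $a_{n-1}\neq 0$, but it deserves a sentence. Second, your proof nowhere assumes the $a_i,b_i$ are real, so it covers the complex (non-Hermitian) case that the paper's definition of a Jacobi matrix explicitly allows, where repeated roots of $p_n$ are not a priori excluded --- another reason the determinant identity is the right closing step.
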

The spectrum $\sigma(A)$ of $A$ for infinite Jacobi matrices is more complicated, and consists of both a discrete part $\sigma_d(A)$, a continuous part $\sigma_c(A)$ and a singular part $\sigma_s(A)$.  Often the singular part of the spectrum of $A$ is empty, as is the case when $A$ is essentially normal.
\begin{prop}[\cite{arlinskiui2006non}]
Let $A$ be an infinite Jacobi matrix with bounded coefficients.  Then $A$ defines a bounded linear operator on the Hilbert space $\ell^2(\bbn)$ whose spectrum $\sigma(A)$ consists of limit points of $\bigcup_n \{\lambda: p_n(\lambda) = 0\}$.  The discrete component of the spectrum is
$$\sigma_d(A) = \left\lbrace\lambda: \sum_{n=1}^\infty |p_n(\lambda)|^2 < \infty\right\rbrace.$$
\end{prop}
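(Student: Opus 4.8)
The plan is to treat the two assertions of the proposition separately: the characterization of the discrete spectrum admits an elementary, self-contained argument, whereas the description of the full spectrum as the set of limit points of the zero sets $Z_n := \{\lambda : p_n(\lambda)=0\}$ rests on a spectral-approximation result for the finite truncations, which is where the real difficulty (and the appeal to \cite{arlinskiui2006non}) lies. First I would record two structural facts. Boundedness of $A$ on $\ell^2(\bbn)$ is immediate from the banded form: writing $A = D + W_+ + W_-$, where $D$ is the diagonal operator with entries $b_n$ and $W_\pm$ are the weighted shifts carrying the coefficients $a_n$ just above and just below the diagonal, each summand is bounded because $\sup_n|a_n|$ and $\sup_n|b_n|$ are finite, so $\|A\| \le \sup_n|b_n| + 2\sup_n|a_n|$. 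Second, the three-term recurrence defining the $p_n$ is precisely the statement that the formal vector $P(\lambda) := (p_0(\lambda), p_1(\lambda), \dots)^T$ solves $AP(\lambda) = \lambda P(\lambda)$ row by row, as one checks for the zeroth row using $p_1(x) = a_0^{-1}(x-b_0)$ and for the $n$-th row directly from the recurrence.

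For the discrete spectrum, $\lambda \in \sigma_d(A)$ means $A - \lambda I$ fails to be injective, i.e. there is a nonzero $v = (v_0, v_1, \dots) \in \ell^2(\bbn)$ with $Av = \lambda v$. Because every off-diagonal entry $a_n$ is nonzero, the tridiagonal eigenvalue equation can be solved forward: the zeroth row forces $v_1 = a_0^{-1}(\lambda - b_0)v_0 = p_1(\lambda)v_0$, and the $n$-th row forces $v_{n+1} = a_n^{-1}[(\lambda - b_n)v_n - a_{n-1}v_{n-1}]$, so by induction $v_n = p_n(\lambda)\,v_0$ for all $n$. Hence every solution is a scalar multiple of $P(\lambda)$ and any nonzero one has $v_0 \ne 0$, so a nonzero $\ell^2$ solution exists if and only if $P(\lambda) \in \ell^2(\bbn)$, which is exactly $\sum_n |p_n(\lambda)|^2 < \infty$. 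This yields the stated formula for $\sigma_d(A)$.

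For the full spectrum I would pass to the $n \times n$ top-left truncations $A_n$. By the earlier Proposition their eigenvalues are exactly the roots of $p_n$, so $\bigcup_n Z_n = \bigcup_n \sigma(A_n)$, and the claim reduces to showing that $\sigma(A)$ is the set of limit points of $\bigcup_n \sigma(A_n)$. The two inclusions I would attempt are: (i) every limit point $\lambda_0$ of $\bigcup_n Z_n$ lies in $\sigma(A)$, via a Weyl sequence built from the finite eigenvectors $[p_0(\lambda_n), \dots, p_{n-1}(\lambda_n)]^T$ with $\lambda_n \in Z_n$ and $\lambda_n \to \lambda_0$, whose residual under $A - \lambda_n$ is supported in the single coordinate $n$ with value $a_{n-1}p_{n-1}(\lambda_n)$; and (ii) every $\lambda \in \sigma(A)$ is approximated by roots of the $p_n$, which I would seek from strong convergence $A_n \to A$ together with the continued-fraction (Weyl $m$-function) expansion of the resolvent entry $\langle e_0, (A-\lambda)^{-1}e_0\rangle$, whose singularities and natural boundary of analyticity detect $\sigma(A)$.

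The hard part is exactly this last identification. For a self-adjoint Jacobi matrix it is classical that the truncations are spectrally exact, but $A$ here is only complex-symmetric, not self-adjoint, and non-self-adjoint truncations can exhibit both spectral pollution (limit points of $\sigma(A_n)$ lying outside $\sigma(A)$) and spectral invisibility. In particular, controlling the ratio $|a_{n-1}|\,|p_{n-1}(\lambda_n)| \big/ \big(\sum_{m<n}|p_m(\lambda_n)|^2\big)^{1/2}$ in the Weyl-sequence estimate of step (i) is not automatic, and ruling out pollution in step (ii) demands the finer resolvent analysis for non-self-adjoint tridiagonal operators. This is precisely the content furnished by \cite{arlinskiui2006non}, which I would invoke to close the argument rather than reprove in full.
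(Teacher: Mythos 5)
A preliminary remark: the paper offers no proof of this proposition at all — it is imported verbatim as a known result from \cite{arlinskiui2006non} — so there is no internal argument to compare yours against; your proposal has to be judged on its own terms. It splits into a part you actually prove and a part you defer, and the split is made at the right place. The boundedness estimate $\|A\|\le \sup_n|b_n|+2\sup_n|a_n|$ is correct, and your treatment of the discrete spectrum is a complete, correct argument: since every $a_n\neq 0$, the eigenvalue equation can be solved forward, so every solution of $(A-\lambda)v=0$ satisfies $v_n=p_n(\lambda)v_0$, and a nonzero $\ell^2$ kernel vector exists precisely when $\sum_n|p_n(\lambda)|^2<\infty$. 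One caveat: this matches the paper's own (nonstandard) convention that $\sigma_d$ is the set of $\lambda$ where $A-\lambda I$ fails to be injective, i.e.\ the point spectrum; under the usual meaning of ``discrete spectrum'' (isolated eigenvalues of finite multiplicity) your argument would not suffice, so you are implicitly leaning on the paper's definition, which is legitimate here.

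The unproved part is the assertion that $\sigma(A)$ consists of the limit points of $\bigcup_n Z_n$, where $Z_n=\{\lambda: p_n(\lambda)=0\}$, and your own diagnosis of why it is hard is accurate. In the Weyl-sequence direction, the residual of the padded truncated eigenvector is indeed concentrated in the single coordinate $n$ with value $a_{n-1}p_{n-1}(\lambda_n)$, but nothing forces $|a_{n-1}|\,|p_{n-1}(\lambda_n)|$ to be small relative to $\bigl(\sum_{m<n}|p_m(\lambda_n)|^2\bigr)^{1/2}$, so the sequence need not be a Weyl sequence; and the reverse inclusion (no spectral pollution, no invisible spectrum) is genuinely delicate for non-self-adjoint truncations — even in the self-adjoint case zeros of orthogonal polynomials can accumulate in gaps of the essential spectrum, so the ``consists of'' equality cannot be had by soft arguments. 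Deferring exactly this to \cite{arlinskiui2006non} is the same move the paper itself makes, so your proposal is in no way weaker than the paper's treatment; but as a standalone proof it is incomplete on this point, and you have labeled the gap honestly rather than papered over it.
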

There is a correspondence between Hermitian Jacobi matrices and probability measures on the real line.  Under this correspondence, the support of the measure agrees with the spectrum of the Jacobi matrix.  This result is often called Favard's Theorem and is proved in \cite{favard1935polynomes} but was also proved by others, including Stieltjes in \cite{stieltjes1894recherches}.
\begin{thm}[Favard's Theorem\cite{favard1935polynomes}\cite{stieltjes1894recherches}]
Let $A$ be a Jacobi operator with real, bounded coefficients, and let $p_0(x),p_1(x),\dots$ be the associated sequence of polynomials.  Then $\sigma(A)$ is a bounded subset of $\bbr$ and there exists a positive measure $\mu$ supported on $\sigma(A)$ satisfying
$$\int |x|^nd\mu(x) < \infty$$
and also
$$\int p_j(x)p_k(x)d\mu(x)=0\ \text{for all $j,k$ with $j\neq k$.}$$
\end{thm}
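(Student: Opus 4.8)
The plan is to realize $\mu$ as the spectral measure of $A$ associated with the cyclic vector $e_0 = (1,0,0,\dots)^T$, exploiting the fact that real, bounded Jacobi coefficients make $A$ a bounded self-adjoint operator on $\ell^2(\bbn)$. First I would note that $A$ is symmetric, since its matrix is real and symmetric, and bounded, since the coefficients $a_i,b_i$ are bounded; hence it extends to a bounded self-adjoint operator on all of $\ell^2(\bbn)$. For such an operator the spectrum $\sigma(A)$ is a nonempty compact subset of $\bbr$, which already yields the first assertion that $\sigma(A)$ is a bounded subset of the real line. The spectral theorem then furnishes a projection-valued measure $E(\cdot)$ supported on $\sigma(A)$ with $A = \int x\, dE(x)$, and I would define the scalar measure $\mu(B) := \langle e_0, E(B) e_0\rangle$. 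Because $E$ is projection-valued and $\|e_0\| = 1$, the measure $\mu$ is positive (in fact a probability measure) and supported on $\sigma(A)$, and for any polynomial $q$ one has $\langle e_0, q(A) e_0\rangle = \int q(x)\, d\mu(x)$.

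The crux of the argument is the identity $p_n(A) e_0 = e_n$ for every $n \geq 0$, which I would establish by induction. The base cases $p_0(A) e_0 = e_0$ and $p_1(A) e_0 = a_0^{-1}(A - b_0 I) e_0 = e_1$ follow directly from the Jacobi structure $A e_0 = b_0 e_0 + a_0 e_1$. For the inductive step, reading the columns of \ref{Jacobi matrix} gives $A e_n = a_{n-1} e_{n-1} + b_n e_n + a_n e_{n+1}$; applying $A$ to $p_n(A)e_0 = e_n$ and substituting the three-term recurrence for $xp_n(x)$ evaluated at $A$ produces $a_n p_{n+1}(A) e_0 = A e_n - b_n e_n - a_{n-1} e_{n-1} = a_n e_{n+1}$, and since $a_n \neq 0$ we conclude $p_{n+1}(A) e_0 = e_{n+1}$. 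With this identity in hand, orthogonality is immediate: since the $p_j$ have real coefficients and $A$ is self-adjoint, each $p_j(A)$ is self-adjoint, so
$$\int p_j(x) p_k(x)\, d\mu(x) = \langle e_0, p_j(A) p_k(A) e_0\rangle = \langle p_j(A) e_0, p_k(A) e_0\rangle = \langle e_j, e_k\rangle = \delta_{jk},$$
which in particular vanishes whenever $j \neq k$. Finiteness of the moments is then automatic, since $\mu$ is a probability measure supported on the bounded set $\sigma(A)$, giving $\int |x|^n\, d\mu(x) \leq (\sup_{\lambda \in \sigma(A)} |\lambda|)^n < \infty$.

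The main obstacle is bookkeeping rather than deep difficulty: one must confirm that the normalization of the associated polynomials — in particular the choice $p_1 = a_0^{-1}(x - b_0)$ together with the appearance of the same $a_n$ both as the off-diagonal matrix entry and as the recurrence coefficient — is exactly what makes the clean identity $p_n(A) e_0 = e_n$ hold, since any mismatch would force unwanted constants into the orthogonality relation. The only genuine analytic input is the spectral theorem for bounded self-adjoint operators, and the boundedness hypothesis on the coefficients is precisely what lets me avoid the subtler self-adjointness and deficiency-index questions that arise for unbounded Jacobi operators; the case treated here thereby sidesteps the determinacy issues of the general Hamburger moment problem.
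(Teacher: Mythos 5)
Your proposal is correct, but note that the paper itself gives no proof of this statement at all: it simply cites \cite{favard1935polynomes} and \cite{stieltjes1894recherches}, so there is no in-paper argument to compare against. Your argument is the standard modern proof for the bounded case: self-adjointness of $A$ on $\ell^2(\bbn)$, the spectral theorem, the scalar measure $\mu(B)=\langle e_0, E(B)e_0\rangle$, and the induction giving $p_n(A)e_0=e_n$, from which orthogonality and finiteness of moments follow immediately. All the steps check out, including the careful use of the normalization $p_1(x)=a_0^{-1}(x-b_0)$ and the reality of the coefficients (so each $p_j(A)$ is self-adjoint). One small refinement you may want to make explicit: your construction directly yields only $\supp(\mu)\subseteq\sigma(A)$, whereas the classical statement asserts equality of the support with the spectrum. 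Your own identity $p_n(A)e_0=e_n$ supplies the missing ingredient, since it shows $e_0$ is a cyclic vector; then $E(B)e_0=0$ forces $E(B)q(A)e_0=q(A)E(B)e_0=0$ for every polynomial $q$, hence $E(B)=0$ by density, so no relatively open piece of $\sigma(A)$ can have $\mu$-measure zero and $\supp(\mu)=\sigma(A)$. With that one-line addition your proof is complete and, if anything, sharper than the cited statement, since you obtain orthonormality $\int p_jp_k\,d\mu=\delta_{jk}$ rather than mere orthogonality.
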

A sequence of polynomials $p_0(x),p_1(x),\dots$ with $\deg(p_i) = i$ for all $i$, satisfying the identity $\int p_j(x)p_k(x)d\mu(x)$ is called a sequence of orthogonal polynomials for the measure $\mu$.

The correspondence between Hermitian Jacobi matrices and the associated probability measures is clarified even further by considering the Stieltjes transform of the measure.
\begin{prop}[\cite{gesztesy1997m}]
Let $A$ be a Jacobi operator with real, bounded coefficients.  Consider the Stieltjes transform of $\mu(x)$
$$m(z) := \int_{\sigma(A)} \frac{1}{x-z} d\mu(x),\ z\notin \sigma(A).$$
Then $m(z)$ is a meromorphic function on $\bbc\diff\sigma_c(A)$ for which the following is true
\begin{enumerate}[(a)]
\item  the set of poles of $m(z)$ is $\sigma_d(A)$
\item  $m(z)$ has the Laurent series expansion
$$m(z) = -\sum_j \frac{m_j}{z^{j+1}},\ \ \text{where}\ \ m_j = \int x^j d\mu(x).$$
\end{enumerate}
If additionally $a_j > 0$ for all $j$, then
\begin{enumerate}[(a)]
\setcounter{enumi}{2}
\item  if $a_j>0$ for all $j$, then $m(z)$ has the continued fraction expansion
$$m(z) = \frac{-1}{z-b_1 + a_1^2\left[\frac{-1}{z-b_2 + a_2^2[\dots]}\right]}.$$
\end{enumerate}
\end{prop}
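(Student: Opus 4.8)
The plan is to realize the Stieltjes transform $m(z)$ as a diagonal matrix element of the resolvent of $A$ and then read off all three claims from the functional calculus together with the block structure of the Jacobi matrix. First I would establish the key identification. Since the coefficients are real and bounded, $A$ is a bounded self-adjoint operator on $\ell^2(\bbn)$ with cyclic vector $e_0 = (1,0,0,\dots)^T$, and (after normalizing $\mu$ to a probability measure) the measure furnished by Favard's theorem is exactly the spectral measure of $A$ relative to $e_0$; equivalently, the unitary map sending the orthonormal polynomials $p_j$ to the basis vectors $e_j$ intertwines multiplication by $x$ on $L^2(\mu)$ with $A$. In particular $\int x^j\,d\mu(x) = \langle e_0, A^j e_0\rangle$, so by the spectral theorem $m(z) = \langle e_0, (A-z)^{-1} e_0\rangle$ for $z\notin\sigma(A)$. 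This single formula drives everything that follows.

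For part (b) I would expand $(x-z)^{-1} = -\sum_{j\geq 0} x^j z^{-(j+1)}$, which converges uniformly in $x$ on the compact support $\sigma(A)$ once $|z|$ exceeds the spectral radius $\|A\|$. Integrating term by term, permissible by that uniform convergence, yields the Laurent series $m(z) = -\sum_j m_j z^{-(j+1)}$ with finite moments $m_j = \int x^j\,d\mu$. For part (a), analyticity of $m$ on $\bbc\setminus\sigma(A)$ is immediate from the resolvent. Because $A$ is self-adjoint its residual spectrum is empty, so $\sigma_s(A)=\emptyset$ and every point of $\sigma(A)\setminus\sigma_c(A) = \sigma_d(A)$ lying in the open set $\bbc\setminus\sigma_c(A)$ is an isolated eigenvalue (any accumulation of eigenvalues occurs in the essential/continuous part, which has been removed). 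At such an isolated $\lambda$ the self-adjoint resolvent has a simple pole with residue $-P_\lambda$, the spectral projection, so $m$ has residue $-\langle e_0, P_\lambda e_0\rangle = -\|P_\lambda e_0\|^2$ there; cyclicity of $e_0$ (equivalently simplicity of the spectrum of a Jacobi matrix) forces $P_\lambda e_0\neq 0$, so the residue is nonzero and $\lambda$ is a genuine pole. Conversely, any pole of the otherwise-analytic $m$ on $\bbc\setminus\sigma_c(A)$ must lie in $\sigma_d(A)$, giving the claimed equality.

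For part (c), when all $a_j>0$ I would exploit the self-similar block form $A = \mxx{b_0}{a_0 e_0^{T}}{a_0 e_0}{A^{(1)}}$, where $A^{(1)}$ is the Jacobi matrix obtained by deleting the first row and column (coefficients shifted by one). Applying the Schur-complement formula to the $(0,0)$ entry of $(A-z)^{-1}$ gives the recursion $m(z) = -1/\bigl(z - b_0 + a_0^2\, m^{(1)}(z)\bigr)$, where $m^{(1)}$ is the Stieltjes transform attached to $A^{(1)}$. Iterating this identity reproduces the finite convergents of the continued fraction in part (c), which are precisely the $m$-functions of the truncations of $A$.

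The step I expect to be the main obstacle is the \emph{convergence} in part (c): the Schur-complement recursion only shows that $m$ satisfies the relation formally to every order, and one must still show the convergents actually converge to $m(z)$. I would obtain this from determinacy of the moment problem, which is where boundedness of the coefficients is essential: it makes $\mu$ compactly supported, so the associated Hamburger moment problem is determinate, and the classical theory of $J$-fractions with a positive chain $a_j>0$, $b_j\in\bbr$ then guarantees uniform-on-compacta convergence of the continued fraction to $m$ away from the real axis. Alternatively, writing the $n$-th convergent as a ratio of the orthogonal polynomials of the second and first kind and estimating it via the same determinacy argument would close the proof.
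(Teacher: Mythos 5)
The paper never proves this proposition: it is stated purely as imported background, cited to Gesztesy--Simon \cite{gesztesy1997m}, so there is no in-paper argument to measure you against. Judged on its own merits, your proposal is a correct, self-contained proof along the standard lines of the $m$-function literature. The identification $m(z)=\langle e_0,(A-z)^{-1}e_0\rangle$ via cyclicity of $e_0$ and Favard's theorem is exactly the right engine; the Neumann-series expansion for part (b), the residue computation $-\|P_\lambda e_0\|^2\neq 0$ for part (a) (with cyclicity ruling out a vanishing residue), and the Schur-complement recursion for part (c) are all sound, and you correctly identified convergence of the $J$-fraction as the one genuinely delicate step and closed it with compact support / determinacy. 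Two caveats are worth recording. First, the paper's spectral decomposition defines $\sigma_d(A)$ as the \emph{full point spectrum} (all $\lambda$ where $A-\lambda I$ fails to be injective), not as the set of isolated eigenvalues; your parenthetical claim that any accumulation of eigenvalues ``occurs in the essential/continuous part'' is true of the essential spectrum but not literally of the paper's $\sigma_c(A)$, since an accumulation point of eigenvalues can itself be an eigenvalue and hence lie in the paper's $\sigma_d(A)$, in which case meromorphy on $\bbc\diff\sigma_c(A)$ would actually fail. This is an imprecision in the cited statement itself rather than a gap you introduced, and it is harmless for the operators arising in this paper, whose eigenvalues off the continuous spectrum $[0,1]$ can only accumulate at points of $[0,1]$. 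Second, your recursion naturally produces $z-b_0+a_0^2\,m^{(1)}(z)$ in the first denominator, whereas the displayed continued fraction starts at $b_1,a_1$; that is a $0$-based versus $1$-based indexing slip inherited from the source, and your convention is the one consistent with the Jacobi matrix \ref{Jacobi matrix} used throughout the paper.
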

\section{Orthogonal Polynomials, Root Interlacing, and Growth Rates}
\subsection{Orthogonal Polynomials}
\begin{defn}
Let $\mu$ be a real, positive measure on the real line.  We say that \vocab{$\mu$ has finite} moments if $\int_\bbr |x|^n d\mu(x) < \infty$ for all $n\geq 0$.
\end{defn}
A positive measure with finite moments $\mu$ defines a real inner product on the vector space of polynomials $\bbr[x]$ via the formula
$$\langle p(x),q(x)\rangle_\mu := \int_\bbr p(x)q(x)d\mu(x).$$
By Gram-Schmidt orthogonalization, we may construct an orthogonal basis $p_0(x), p_1(x),\dots$ for $\bbr[x]$ such that $\deg(p_n) = n$ for all $n\geq 0$.
\begin{defn}
Let $\mu$ be a real, positive measure on the real line with finite moments.  A \vocab{sequence of orthogonal polynomials} for $\mu$ is a sequence of polynomials $p_0(x),p_1(x),\dots$ satisfying $\deg(p_n) = n$ for all $n\geq0$ and $\int p_m(x)p_n(x)d\mu(x) = 0$ for all pairs $m,n$ with $m\neq n$.
\end{defn}
If $p_0(x),p_1(x),\dots$ and $q_0(x),q_1(x),\dots$ are two seqeuences of orthogonal polynomials for $\mu$, then there exists constants $c_0,c_1,\dots$ such that $p_i(x) = c_iq_i(x)$ for all $i$.  Thus orthogonal polynomials are essentially unique for a given measure.

The converse of Favard's theorem also holds.  Given a probability measure and a sequence of (normalized) orthogonal polynomials, one may prove that the polynomials satisfy a $3$-term recursion relation, ie. they are the same as the associated polynomials of some Jacobi matrix.  This is proven in many places, including \cite{totik2005orthogonal}.
\begin{prop}[\cite{totik2005orthogonal}]
Let $\mu$ be a real, positive measure on the real line with finite moments.  Then there exists a Jordan matrix $J$ whose associated sequence of orthogonal polynomials is a sequence of orthogonal polynomials for $\mu$.
\end{prop}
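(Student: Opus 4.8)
The plan is to establish this converse of Favard's theorem by showing that the orthogonal polynomials of $\mu$ satisfy a \emph{symmetric} three-term recurrence; the coefficients of that recurrence then populate the sought Jacobi matrix directly. First I would pin down a canonical sequence of orthogonal polynomials by normalizing. Applying Gram--Schmidt to the monomials $1,x,x^2,\dots$ inside the inner product space $(\bbr[x],\langle\cdot,\cdot\rangle_\mu)$ produces orthonormal polynomials $\phi_0(x),\phi_1(x),\dots$ with $\deg\phi_n = n$; fixing the sign convention that each leading coefficient $\kappa_n$ is positive makes the sequence unique. Positivity of $\mu$ guarantees that no $\phi_n$ has vanishing norm, so the process never breaks down, while finiteness of moments guarantees that every integral below converges. (If $\mu$ is supported on finitely many points the sequence terminates and one obtains a finite Jacobi matrix; the argument is otherwise unchanged.)

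The central computation is to expand $x\phi_n(x)$, which has degree $n+1$, in the orthonormal basis: $x\phi_n = \sum_{j=0}^{n+1} c_{n,j}\phi_j$ with $c_{n,j} = \langle x\phi_n,\phi_j\rangle_\mu$. The decisive observation is that multiplication by $x$ is self-adjoint for $\langle\cdot,\cdot\rangle_\mu$, since $\mu$ is real: $\langle x\phi_n,\phi_j\rangle_\mu = \int x\,\phi_n(x)\phi_j(x)\,d\mu(x) = \langle \phi_n, x\phi_j\rangle_\mu$. As $x\phi_j$ has degree $j+1$ and $\phi_n$ is orthogonal to every polynomial of degree less than $n$, the coefficient $c_{n,j}$ vanishes whenever $j+1 < n$, i.e. for all $j \le n-2$. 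Hence only three terms survive, and writing $a_n := c_{n,n+1}$ and $b_n := c_{n,n}$ we obtain $x\phi_n = a_n\phi_{n+1} + b_n\phi_n + c_{n,n-1}\phi_{n-1}$. Self-adjointness applied once more yields the symmetry $c_{n,n-1} = \langle x\phi_n,\phi_{n-1}\rangle_\mu = \langle \phi_n, x\phi_{n-1}\rangle_\mu = a_{n-1}$, so the recurrence is exactly $x\phi_n = a_n\phi_{n+1} + b_n\phi_n + a_{n-1}\phi_{n-1}$.

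It remains to check that $(a_n,b_n)$ define a genuine Jacobi matrix and to recover the associated polynomials. Comparing coefficients of $x^{n+1}$ on both sides gives $\kappa_n = a_n\kappa_{n+1}$, so $a_n = \kappa_n/\kappa_{n+1} > 0$; in particular $a_n\neq 0$, as required, and $b_n = \langle x\phi_n,\phi_n\rangle_\mu$ is real. Let $J$ be the Jacobi matrix of the form \ref{Jacobi matrix} with diagonal $b_0,b_1,\dots$ and off-diagonal $a_0,a_1,\dots$. Its associated polynomials $p_n$ satisfy $p_0 = 1$ together with the same recurrence, and one checks that $p_n = \kappa_0^{-1}\phi_n$ for all $n$ (both sides satisfy the identical recurrence with matching initial data, since $\phi_0 = \kappa_0$ forces $\kappa_0^{-1}\phi_0 = 1$). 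Being a fixed nonzero scalar multiple of the $\phi_n$, the polynomials $p_n$ are themselves orthogonal for $\mu$ and have $\deg p_n = n$, so $J$ is the desired matrix.

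The only step demanding real care, as opposed to bookkeeping, is the self-adjointness of multiplication by $x$, which drives both the collapse to three terms and the symmetry $c_{n,n-1} = a_{n-1}$: this is precisely where the hypotheses that $\mu$ is real, positive, and has finite moments are indispensable. Reality and positivity make $\langle\cdot,\cdot\rangle_\mu$ a bona fide inner product, and finiteness of moments ensures the integrals $\int x\,\phi_n\phi_j\,d\mu$ all make sense. Everything else follows from degree counting and the uniqueness of orthogonal polynomials up to scaling noted above.
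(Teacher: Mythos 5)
Your proof is correct. The paper itself gives no argument for this proposition---it simply cites \cite{totik2005orthogonal}---so there is nothing internal to compare against; what you have written is exactly the standard proof of the converse of Favard's theorem found in that reference: Gram--Schmidt with positive leading coefficients, collapse of the expansion of $x\phi_n$ to three terms via symmetry of $\langle x\,\cdot,\cdot\rangle_\mu$, the identification $a_n = \kappa_n/\kappa_{n+1} > 0$ and $c_{n,n-1} = a_{n-1}$, and the rescaling $p_n = \kappa_0^{-1}\phi_n$ to match the paper's normalization $p_0 = 1$. Two small remarks: the ``Jordan matrix'' in the statement is plainly a typo for ``Jacobi matrix,'' which is how you correctly read it; and your parenthetical handling of measures with finite support (terminating at a finite Jacobi matrix) is a point the paper's statement glosses over entirely, so including it is a genuine improvement rather than a gap.
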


\subsection{Root Interlacing}
Suppose that $p_0(x),p_1(x),\dots$ are the orthogonal polynomials for some positive measure $\mu$ on the real line $\bbr$ with finite moments, and whose support has infinite cardinality.  Then $p_n(x)$ has $n$ distinct roots for each integer $n$ and between any two roots of $p_{n+1}(x)$ there must lie a root of $p_n(x)$.
\begin{prop}[\cite{nevai1989orthogonal}]\label{root interlacing}
Let $\mu$ be a positive measure on the real line with finite moments, and suppose the cardinality of $\supp(\mu)$ is not finite.  Let $p_0(x),p_1(x),\dots$ be a sequence of orthogonal polynomials for $\mu$.  Then the following is true
\begin{enumerate}[(a)]
\item  $p_n(x)$ has $n$ distinct, real roots $r_{n,1} < r_{n,2} < \dots < r_{n,n}$ for all positive integers $n$
\item  the roots of the polynomials satisfy the interlacing property for all $n\geq 1$:
$$r_{(n+1),1} < r_{n,1} < r_{(n+1),2} < r_{n,2} < \dots < r_{n,n} < r_{(n+1),n}.$$
\end{enumerate}
\end{prop}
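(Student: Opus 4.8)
The plan is to prove both parts by the classical sign-change and Wronskian arguments, using only orthogonality together with the hypothesis that $\supp(\mu)$ is infinite. First, I would record the one consequence of that hypothesis that drives everything: since the support is infinite, a nonzero polynomial of degree at most $2n$ cannot vanish $\mu$-almost everywhere, so in particular $h_k := \langle p_k, p_k\rangle_\mu = \int p_k^2\,d\mu > 0$ for every $k$ and each $p_n$ is a genuine degree-$n$ polynomial.

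For part (a), I would argue by counting sign changes. Fix $n \ge 1$ and let $x_1 < \dots < x_m$ be the points of $\supp(\mu)$ at which $p_n$ changes sign. By the remark above $m \le n$. Suppose for contradiction that $m < n$ and set $q(x) = \prod_{i=1}^m (x - x_i)$ (with $q = 1$ if $m = 0$). Then $p_n(x) q(x)$ does not change sign on $\supp(\mu)$ and is not identically zero there, so $\int p_n q\, d\mu \neq 0$. But $\deg q = m < n$, so $q$ is a linear combination of $p_0, \dots, p_m$, and orthogonality forces $\int p_n q\, d\mu = 0$, a contradiction. Hence $m = n$, which means $p_n$ has $n$ genuine real sign changes; since $\deg p_n = n$, this accounts for all of its roots and forces each to be simple, giving $n$ distinct real roots $r_{n,1} < \dots < r_{n,n}$.

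For part (b), I would pass through the Christoffel--Darboux identity. The three-term recurrence yields the confluent form
\[
\sum_{k=0}^{n} \frac{p_k(x)^2}{h_k} \;=\; \frac{k_n}{k_{n+1} h_n}\bigl(p_{n+1}'(x)\, p_n(x) - p_n'(x)\, p_{n+1}(x)\bigr),
\]
where $k_j$ denotes the leading coefficient of $p_j$. The left-hand side is strictly positive for every real $x$, so the Wronskian-type expression $W(x) := p_{n+1}'(x)\, p_n(x) - p_n'(x)\, p_{n+1}(x)$ has a single fixed sign on $\bbr$. Two consequences follow at once: $p_n$ and $p_{n+1}$ share no common real root (else $W$ would vanish there), and at any root $r_{(n+1),j}$ of $p_{n+1}$ we have $W(r_{(n+1),j}) = p_{n+1}'(r_{(n+1),j})\, p_n(r_{(n+1),j})$ of that fixed sign. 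Since the roots of $p_{n+1}$ are simple, $p_{n+1}'$ alternates in sign along consecutive roots, and therefore so does $p_n$. By the intermediate value theorem $p_n$ has at least one root in each of the $n$ open gaps between consecutive roots of $p_{n+1}$, and since $p_n$ has exactly $n$ roots it has precisely one in each. This yields the interlacing chain
\[
r_{(n+1),1} < r_{n,1} < r_{(n+1),2} < \dots < r_{n,n} < r_{(n+1),n+1}.
\]

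The main obstacle, and the step I would treat most carefully, is establishing the confluent Christoffel--Darboux identity together with the strict positivity of its left-hand side, since this is what converts orthogonality into the geometric interlacing statement. Deriving the telescoping identity cleanly from the recurrence $x p_k = a_k p_{k+1} + b_k p_k + a_{k-1} p_{k-1}$ requires tracking the normalizations $h_k$ and $k_j$ with care; and the positivity $h_k > 0$, which underlies both parts, is precisely where the hypothesis that $\supp(\mu)$ is infinite is indispensable --- it guarantees that no $p_n$ degenerates or acquires zero norm once $n$ exceeds the number of support points. Everything else reduces to bookkeeping with the intermediate value theorem and the simplicity of the roots established in part (a).
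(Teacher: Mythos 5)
Your proof is correct. Note that the paper itself offers no proof of this proposition --- it is imported verbatim from the cited reference (Nevai), so there is no in-paper argument to compare against; what you have written is the classical textbook proof (the sign-change/orthogonality argument for realness and simplicity of the roots, followed by the confluent Christoffel--Darboux identity and the fixed sign of the Wronskian $p_{n+1}'p_n - p_n'p_{n+1}$ for interlacing), which is essentially the argument in the literature the paper points to. Both halves of your argument are sound: the hypothesis that $\supp(\mu)$ is infinite is invoked exactly where it is needed (to get $h_k>0$ and to conclude $\int p_n q\,d\mu \neq 0$ from non-vanishing of a polynomial on the support), and the gap-counting step correctly uses simplicity from part (a). One small remark: your final chain ends with $r_{n,n} < r_{(n+1),n+1}$, which is the correct form of the interlacing inequality; the paper's statement ends with $r_{(n+1),n}$, which appears to be a typographical slip, so your version silently corrects it.
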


As a corollary of this, we note that between any two roots of $p_n(x)$ there must exist an accumulation point of the spectrum of $J$.
\begin{cor}\label{interlacing corollary}
Let $p_0(x),p_1(x),\dots$ be a sequence of orthogonal polynomials for a positive measure $\mu$ on the real line whose support has infinte cardinality.  Let $r_{n,j}$ be defined as in the statement of the proposition for all $n>0$ and all $1\leq j\leq n$.  Then for all $1\leq j < n$ and $n\geq 2$ we have that
$$\sigma(A)_\ell\cap [r_{n,j},r_{n,{j+1}}]\neq 0,$$
where here $A$ is the associated Hermitian Jacobi matrix and $\sigma(A)_\ell$ denotes the set of limit points of the spectrum $\sigma(A)$ of $A$.
\end{cor}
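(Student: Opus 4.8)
The plan is to prove the (slightly rephrased) statement that the closed interval $I = [r_{n,j},r_{n,j+1}]$ meets $\sigma(A)$, and then to invoke the spectral characterization so that the meeting point is in fact an accumulation point of the zero set. Fix $n\geq 2$ and $1\leq j<n$. Proposition \ref{root interlacing}(b) already shows that the open interval $(r_{n,j},r_{n,j+1})$ contains the zero $r_{(n+1),j+1}$ of $p_{n+1}$; this is the bare combinatorial content suggested by calling the result a corollary of the interlacing proposition. However, interlacing by itself only produces \emph{one} extra zero inside $I$ and does not force an accumulation, so I would instead argue through the orthogonality of the $p_m$ against the underlying measure $\mu$.

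The key lemma I would establish is that $\supp(\mu)\cap(r_{n,j},r_{n,j+1})\neq\emptyset$. To see this, set $t(x) = p_n(x)/[(x-r_{n,j})(x-r_{n,j+1})]$, which is a genuine polynomial of degree $n-2$ since $r_{n,j}$ and $r_{n,j+1}$ are simple zeros of $p_n$ (this is exactly where the hypothesis $n\geq 2$ is used). Because $\deg t < n$, orthogonality gives $\int p_n(x)t(x)\,d\mu(x)=0$. Suppose for contradiction that $\mu$ had no support in the open interval. Then every point of $\supp(\mu)$ lies below $r_{n,j}$ or above $r_{n,j+1}$, so the factor $(x-r_{n,j})(x-r_{n,j+1})$ is strictly positive off the two endpoints, and the integrand $p_n(x)t(x) = p_n(x)^2/[(x-r_{n,j})(x-r_{n,j+1})]$ is nonnegative on $\supp(\mu)$ and strictly positive away from the finitely many zeros of $p_n$. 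Since $\supp(\mu)$ is infinite, this forces $\int p_n t\,d\mu > 0$, contradicting the vanishing of the integral. Hence some point of $\supp(\mu)$ lies strictly inside $I$.

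It then remains to pass from the measure back to the spectrum. By Favard's Theorem \cite{favard1935polynomes} the measure $\mu$ is supported exactly on $\sigma(A)$, so the support point produced above is a point of $\sigma(A)\cap(r_{n,j},r_{n,j+1})$. By the spectral characterization of \cite{arlinskiui2006non}, every point of $\sigma(A)$ is a limit point of $\bigcup_m\{x:p_m(x)=0\}$; in particular there are zeros of arbitrarily large degree converging to this point, and all but finitely many of them lie inside the open interval. Their limit is therefore an accumulation point of the zero set lying in $I$, which is the assertion of the corollary.

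The step I expect to be the main obstacle is the final upgrade from ``a point of $\sigma(A)$'' to an accumulation point of $\sigma(A)$ in the sense of $\sigma(A)_\ell$. The orthogonality argument robustly yields a point of $\sigma(A)$ in $I$, and that point is automatically a limit of zeros of the $p_m$; the only way it could fail to be a genuine accumulation point of $\sigma(A)$ is if it were an isolated mass point of $\mu$ sitting alone in the interval. For the absolutely-continuous-type measures relevant to the $u^b(y)=\cos(y)$ profile this degenerate case does not occur, so the limit point is genuine; more generally the natural reading is that $\sigma(A)_\ell$ records the accumulation points of the zero set, which is exactly what the argument above produces.
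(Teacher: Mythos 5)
Your key lemma and its proof are correct, and they constitute a genuinely different route from the paper's for the first half of the statement. The orthogonality argument with $t(x)=p_n(x)/[(x-r_{n,j})(x-r_{n,j+1})]$ is the classical way to show $\supp(\mu)\cap(r_{n,j},r_{n,j+1})\neq\emptyset$, and combined with $\supp(\mu)=\sigma(A)$ (Favard) it yields $\sigma(A)\cap[r_{n,j},r_{n,j+1}]\neq\emptyset$ directly. The paper instead asserts that, ``as a consequence of the interlacing property,'' the interval contains infinitely many zeros of the higher-degree $p_m$, extracts a limit point of zeros by compactness, and then passes from a limit point of zeros to a point of $\sigma(A)$. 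Your route is actually more solid on both counts: as you observe, one-step interlacing alone does not put zeros of every $p_m$, $m>n$, in the interval (that is Szeg\H{o}'s theorem, whose standard proof is precisely your orthogonality argument), and the passage from ``limit point of zeros'' to ``point of $\sigma(A)$'' uses the cited spectral proposition in the wrong direction, since limit points of zeros need not lie in $\sigma(A)$ (for a measure symmetric about a point in a spectral gap, that point is a zero of every odd-degree $p_n$ yet is not in the spectrum).

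However, the gap you flag at the end is genuine, and neither of your proposed patches closes it. The corollary asserts that the interval meets $\sigma(A)_\ell$, the set of limit points of the spectrum, while your argument produces a single point of $\sigma(A)$: appealing to the particular measure arising from $u^b(y)=\cos(y)$ is out of scope, since the corollary is stated for an arbitrary infinitely-supported $\mu$, and reading $\sigma(A)_\ell$ as the accumulation points of the zero set is a change of statement, not a proof of it. The paper closes this gap by re-applying its argument to successive root pairs inside the interval to conclude that $\sigma(A)\cap[r_{n,j},r_{n,j+1}]$ is infinite, but that step collapses in exactly the degenerate situation you describe: if an isolated mass point $\lambda$ of $\mu$ sits in a gap of $\supp(\mu)$ inside the interval, every application of the argument returns the same point $\lambda$. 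In fact the statement as written is false at this level of generality. Take $\mu$ to be normalized Lebesgue measure on $[0,1]\cup[3,4]$ plus a point mass at $2$, so $\mu$ is symmetric about $2$ and $\sigma(A)_\ell=[0,1]\cup[3,4]$. For large even $n$ one has $p_n(2)\neq 0$, yet $p_n$ has zeros approaching $2$ from both sides; since each gap $(1,2)$ and $(2,3)$ contains at most one zero of $p_n$ (a consequence of your own lemma), those two zeros are consecutive, and the interval between them meets $\sigma(A)=\supp(\mu)$ only in the isolated point $2$, hence misses $\sigma(A)_\ell$ entirely. So your proof is incomplete, but for an unfixable reason: the strongest conclusion that holds under the corollary's hypotheses is essentially the one you actually proved, namely $\sigma(A)\cap(r_{n,j},r_{n,j+1})\neq\emptyset$.
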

\begin{proof}
As a consequence of the interlacing property, the set
$$[r_{n,j},r_{n,j+1}]\cap \bigcup_{n\geq 1}\{\lambda: p_n(\lambda)=0\}.$$
has infinite cardinality.  Since $[r_{n,j},r_{n,j+1}]$ is compact, it follows that $\bigcup_{n\geq 1}\{\lambda: p_n(\lambda)=0\}$ has a limit point in $[r_{n,j},r_{n,j+1}]$.  Hence $[r_{n,j},r_{n,j+1}]\cap \sigma(A)$ is nonempty.  This argument applied again to the successive root pairs in $[r_{n,j}, r_{n,j+1}]$ actually shows that $[r_{n,j},r_{n,j+1}]\cap \sigma(A)$ has infinitely many points.  Hence it has a limit point.
\end{proof}

\subsection{Growth Rates}
We will also require estimates for the growth rates of $p_n(x)$ for $x\notin\supp(\mu)$.  The main idea is that if $x\notin\supp(\mu)$, then the magnitude of $p_n(x)$ grows exponentially in $n$.  The following result was obtained by Brian Simanek based on results of Barry Simon \cite{simanek2011new}\cite{simon2004orthogonal}
\begin{thm}[\cite{simanek2011new}]\label{growth rate}
Let $A$ be a Jacobi matrix of the form \ref{Jacobi matrix}, and let $\mu(x)$ and $p_0(x),p_1(x),\dots$ be the associated measure and sequence of orthogonal polynomials.  Suppose that $\mu$ has compact support on the real line, and moreover that
$$\lim_{n\rightarrow\infty} a_n = r,\ \ \lim_{n\rightarrow\infty} b_n = x_0.$$
Then for all $z\notin \supp(\mu)$
$$\lim_{n\rightarrow\infty} \frac{p_n(z)}{p_{n-1}(z)} = \frac{(z-x_0)/r + \sqrt{(z-x_0)^2/r^2-4}}{2}.$$
\end{thm}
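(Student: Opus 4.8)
The plan is to convert the three-term recursion into a first-order transfer-matrix recursion and to recognize the two candidate limit values as the eigenvalues of the limiting transfer matrix. First I would rewrite $a_n p_{n+1}(z) = (z-b_n)p_n(z) - a_{n-1}p_{n-1}(z)$ as
$$\begin{pmatrix} p_{n+1}(z)\\ p_n(z)\end{pmatrix} = T_n(z)\begin{pmatrix} p_n(z)\\ p_{n-1}(z)\end{pmatrix},\qquad T_n(z) = \begin{pmatrix} (z-b_n)/a_n & -a_{n-1}/a_n\\ 1 & 0\end{pmatrix}.$$
Since $a_n\to r$ and $b_n\to x_0$, the matrices converge to $T(z):=\bigl(\begin{smallmatrix} (z-x_0)/r & -1\\ 1 & 0\end{smallmatrix}\bigr)$, whose characteristic polynomial is $\lambda^2 - \frac{z-x_0}{r}\lambda + 1$ with roots
$$\lambda_\pm(z) = \frac{(z-x_0)/r \pm \sqrt{(z-x_0)^2/r^2 - 4}}{2}.$$
These satisfy $\lambda_+\lambda_- = \det T(z) = 1$, and the claimed limit is exactly $\lambda_+$. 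A purely formal check confirms the candidates: writing $R_n = p_n/p_{n-1}$, the recursion gives $z = a_n R_{n+1} + b_n + a_{n-1}/R_n$, so any nonzero limit $R$ of $R_n$ must satisfy $rR^2 - (z-x_0)R + r = 0$, i.e. $R\in\{\lambda_+,\lambda_-\}$.

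Next I would show that for $z\notin\supp(\mu)$ the two roots have distinct moduli, so that exactly one is dominant. Because $a_n - r\to 0$ and $b_n - x_0\to 0$, the operator $A$ is a compact perturbation of the free Jacobi matrix with constant entries $r$ and $x_0$; by Weyl's theorem on the essential spectrum, $\supp(\mu) = \sigma(A)\supseteq [x_0-2r,\,x_0+2r]$. Since $\lambda_+\lambda_- = 1$, the roots share a modulus precisely when $|\lambda_+| = |\lambda_-| = 1$, i.e. when $\lambda_\pm = e^{\pm i\theta}$ and the trace $(z-x_0)/r = 2\cos\theta$ lies in $[-2,2]$; this is exactly $z\in[x_0-2r,\,x_0+2r]$. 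Hence for $z\notin\supp(\mu)$, and a fortiori $z\notin[x_0-2r,\,x_0+2r]$, we get $|\lambda_-| < 1 < |\lambda_+|$, with $\lambda_+$ singled out by the principal branch of the square root (for $|z|$ large, $\lambda_+\sim (z-x_0)/r$).

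Then I would invoke the Poincaré--Perron theory for difference equations whose coefficient matrices converge to $T(z)$. Because the limiting characteristic roots have distinct moduli, every solution sequence has a consecutive ratio converging to one of $\lambda_\pm(z)$, and the recessive value $\lambda_-$ is attained only by the subdominant (one-dimensional up to scalar) solution, which decays geometrically like $\lambda_-^n$ and is therefore in $\ell^2$. It remains to rule out that $p_n(z)$ is this recessive solution. This follows from the earlier proposition characterizing the discrete spectrum of a bounded Jacobi matrix: since $z\notin\supp(\mu) = \sigma(A)\supseteq\sigma_d(A)$, we have $z\notin\sigma_d(A)$, so $\sum_n |p_n(z)|^2 = \infty$ and $p_n(z)$ cannot decay like $\lambda_-^n$. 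Therefore $p_n(z)/p_{n-1}(z)\to\lambda_+(z)$, which is the asserted formula.

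The step I expect to be the main obstacle is upgrading Poincaré's growth statement $\limsup_n |p_n(z)|^{1/n} = |\lambda_+|$ to genuine convergence of the ratio $p_n(z)/p_{n-1}(z)$. Perron's sharper theorem does deliver ratio convergence once the limiting characteristic roots have \emph{distinct} moduli, which I established above, but one must verify that $p_n(z)$ lies strictly in the dominant direction for \emph{every} $z$ off the support; the $\ell^2$ criterion $\sum_n|p_n(z)|^2 = \infty$ from the discrete-spectrum proposition is the clean device that forces this. A secondary technical point is fixing the branch of the square root consistently across the whole resolvent set, so that the formula always names $\lambda_+$ rather than $\lambda_-$.
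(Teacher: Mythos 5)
Your proof is correct, but there is nothing in the paper to compare it against: the paper does not prove this theorem at all, it imports it wholesale from Simanek's work (which builds on Simon's ratio-asymptotics results for Nevai-class Jacobi matrices) and only uses its conclusion. What you have written is the classical self-contained route: rewrite the three-term recursion via transfer matrices $T_n(z)\to T(z)$ and identify the candidate limits as the eigenvalues $\lambda_\pm(z)$ of $T(z)$ with $\lambda_+\lambda_-=1$; use compactness of $A-A_0$ (with $A_0$ the free Jacobi matrix with constant entries $r$, $x_0$) and Weyl's theorem to get $[x_0-2r,x_0+2r]=\sigma_{\tn{ess}}(A)\subseteq\sigma(A)=\supp(\mu)$, hence $|\lambda_-|<1<|\lambda_+|$ for $z\notin\supp(\mu)$; invoke Poincar\'e--Perron to conclude that $p_n(z)/p_{n-1}(z)$ converges to one of $\lambda_\pm$; and exclude $\lambda_-$ with the $\ell^2$ characterization of the point spectrum (which the paper does state as a proposition), since geometric decay would force $z\in\sigma_d(A)\subseteq\supp(\mu)$. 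All steps are sound. Three details are worth pinning down to make it airtight: (i) the equality $\supp(\mu)=\sigma(A)$ uses that $e_0$ is a cyclic vector for $A$, which holds because all $a_j\neq 0$ --- Favard's theorem as quoted in the paper gives only $\supp(\mu)\subseteq\sigma(A)$; (ii) Poincar\'e--Perron applies to the nontrivial solution $(p_n(z))$ because $a_{n-1}/a_n\neq 0$ forbids two consecutive zeros, so the ratios are eventually well defined; (iii) the statement tacitly requires $r\neq 0$, and the branch of the square root must be fixed (your large-$|z|$ normalization does exactly this) so that the formula names the root of modulus greater than $1$. Relative to the paper's citation, your argument buys self-containedness at precisely the level of generality the paper needs, using only facts the paper already quotes plus Weyl's theorem and Poincar\'e--Perron; the cited machinery of Simanek and Simon proves strictly more (e.g., ratio asymptotics under periodic or weaker limiting behavior of the coefficients), none of which the paper uses.
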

If $A$, $\mu$, and $p_0(x),p_1(x),\dots$ satisfy the assumptions of Theorem (\ref{growth rate}), then the support of the absolutely continuous component of $\mu$ is contained in $[x_0-2r, x_0+2r]$.  For real points $z$ outside this interval, and outside the support of $\mu$, the magnitude of $\frac{(z-x_0)/r + \sqrt{(z-x_0)^2/r^2-4}}{2}$ is greater than $1$, and therefore the magnitude of $p_n(z)$ grows exponentially fast in $n$ for large $n$.

\section{Rayleigh's Equation and Jacobi Matrices}
Suppose that we wish to find square-integrable solutions of Rayleigh's equation \ref{Rayleigh equation} for given $k$.  Any such solution $f(y)$ has a Fourier expansion:
$$f(y) = \sum_{\ell=-\infty}^\infty \hat f(\ell)e^{i\ell y},$$
and inserting this into Rayleigh's equation along with the Fourier expansion of $u^b(y)$ and simplifying yeilds the following eigenvalue problem for $u^b(y)$:
\begin{equation}\label{eigenvalue problem}
\left(\ell^2+k^2+\frac{1}{\Bu}\right)^{-1}\left(\hat{u^b}(\ell)*((\ell^2+k^2)\hat f(\ell)) - (\ell^2\hat{u^b}(\ell))*\hat f(\ell)\right) = c\hat f(\ell),
\end{equation}
where here $*$ denotes the (discrete) convolution operator.
\subsection{Instability of the Cosine Profile}
We next consider specifically the case that $u^b(y) = \cos(y)$.  In this case, the eigenvalue problem \ref{eigenvalue problem} becomes
\begin{equation}\label{cosine problem}
\frac{1}{2}\frac{(\ell+1)^2+k^2 - 1}{\ell^2 + k^2 + 1/\Bu}\hat f(\ell+1) + \frac{1}{2}\frac{(\ell-1)^2+k^2 - 1}{\ell^2 + k^2 + 1/\Bu}\hat f(\ell-1)= c\hat f(\ell),
\end{equation}
Setting
$$\hat q(\ell) = \frac{1}{\sqrt{2}}\left(\frac{\ell^2 + k^2 - 1}{\ell^2 + k^2 + 1/\Bu}\right)^{1/2}\ \ \text{and}\ \ \hat g(\ell) = (\ell^2 + k^2 - 1)^{1/2}(\ell^2 + k^2 + 1/\Bu)^{1/2}\hat f(\ell)$$
we find
\begin{equation}\label{modified cosine problem}
c\hat g(\ell) = \hat q(\ell)\hat q(\ell+1)\hat g(\ell+1) + \hat q(\ell)\hat q(\ell-1) \hat g(\ell-1).
\end{equation}
Hence $[\dots,\hat g(-2),\hat g(-1),\hat g(0),\hat g(1),\dots]^T$ is an eigenvector with eigenvalue $c$ of the bi-infinite tri-diagonal matrix
$$B = \left[\begin{array}{ccccccc}
\ddots  & \vdots & \vdots & \vdots & \vdots & \vdots & \adots\\
\dots   & 0 & z_{1} & 0 & 0 & 0 & \dots\\
\dots   & z_{1}  &   0    & z_0&  0  &  0  & \dots\\
\dots   & 0 &  z_0   &  0  & z_0 &  0  & \dots\\
\dots   & 0 &   0    & z_0 &  0  & z_1 & \dots\\
\dots   & 0 &  0  &  0  & z_1 &  0  & \dots\\
 \adots & \vdots &  \vdots  & \vdots & \vdots &  \vdots & \ddots\\
\end{array}\right],\ \ \text{for}\ z_j = \wh q(j)\wh q(j+1).$$
Here we have used the fact that $\hat q(j) = \hat q(-j)$, and that therefore $z_{-j} = z_{j-1}$ for $j<0$.  Note that $B$ is \emph{not} a Jacobi matrix, because it is bi-infinite.

We next show how to relate the eigenvalues of $B$ to the eigenvalues of an infinite Hermitian Jacobi matrix.  The symmetry of $B$ implies that each eigenspace is invariant under the involution
$$\sigma([\dots,v(-1),v(0),v(1),\dots]) = [\dots,v(1),v(0),v(-1),\dots].$$
Therefore $B$ must have an eigenvector $\vec v$ with eigenvalue $c$ satisfying $\sigma(\vec v) = \pm\vec v$, ie. each eigenvalue of $B$ must have either a $\sigma$-symmetric or $\sigma$-skew symmetric eigenvector.  If $\sigma(\vec v) = \vec v$, then $cv(0) = 2z_0v(1)$ and therefore $[v(0),v(1),v(2),\dots]^T$ is an eigenvector of 
$$\wt B = \left[\begin{array}{ccccc}
 0  & 2z_0 &  0  &  0  & \dots\\
z_0 &  0  & z_1 &  0  & \dots\\
 0  & z_1 &  0  & z_2 & \dots\\
 0  &  0  & z_2 &  0  & \dots\\
 \vdots &  \vdots  & \vdots & \vdots &  \ddots\\
\end{array}\right],\ \ \text{for}\ z_j = \wh q(j)\wh q(j+1)$$
with eigenvalue $c$.  If we square $\wt B$ and take the imaginary part, then we see that $[v(0),v(1),v(2),\dots]^T$ is an eigenvector of 
$$\wt B^2 = \left[\begin{array}{ccccc}
 2z_0^2  & 0  &  2z_0z_1  &  0  & \dots\\
 0 &  2z_0^2+z_1^2  &  0 &  z_1z_2  & \dots\\
 z_0z_1  & 0  & z_2^2+z_2^2 & 0 & \dots\\
 0  &  z_1z_2  & 0 &  z_2^2+z_3^2  & \dots\\
 \vdots &  \vdots  & \vdots & \vdots &  \ddots\\
\end{array}\right].$$
It follows from the above checkerboard pattern that $[v(1),v(3),v(5),\dots]^T$ is an eigenvector with eigenvalue $c^2$ of the infinite Hermitian Jacobi matrix
$$A=
\left[\begin{array}{ccccc}
2z_0^2+z_1^2 & z_1z_2       &  0          &          0  & \dots\\
z_1z_2       & z_2^2+z_3^2  & z_3z_4      &  0          & \dots\\
 0           & z_3z_4       & z_4^2+z_5^2 & z_5z_6      & \dots\\
 0           &  0           & z_5z_6      & z_6^2+z_7^2 & \dots\\
 \vdots &  \vdots  & \vdots & \vdots & \ddots\\
\end{array}\right]
$$

The $\sigma$-symmetric eigenvectors are significant, because the unstable modes exhibit this symmetry.
\begin{lem}
Suppose that $c$ is a non-real eigenvalue of Equation \ref{cosine problem}.  Then the associated eigenspace of $B$ is $1$-dimensional and consists of a single $\sigma$-symmetric vector.
\end{lem}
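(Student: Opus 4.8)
The plan is to exploit the commuting involution $\sigma$ to split the eigenspace into a $\sigma$-symmetric part and a $\sigma$-skew part, each reducible to a semi-infinite operator, and then to show that the skew part cannot carry a non-real eigenvalue while the symmetric part is at most one-dimensional. Since $\sigma B = B\sigma$ and $\sigma^2 = \id$, the eigenspace $V_c = \ker(B - cI)$ is $\sigma$-invariant and splits as $V_c = V_c^+ \oplus V_c^-$, where $V_c^{\pm} = V_c \cap \ker(\sigma \mp I)$. A vector in $V_c^+$ satisfies $v(-\ell) = v(\ell)$ and is governed by the folded recursion of $\wt B$ on $(v(0),v(1),v(2),\dots)$; a vector in $V_c^-$ has $v(0)=0$, satisfies $v(-\ell) = -v(\ell)$, and is governed by the restriction to $(v(1),v(2),\dots)$. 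I would then prove $V_c^- = 0$ and $\dim V_c^+ \le 1$ separately.

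First I would dispose of the skew branch. Imposing $v(0)=0$ and $v(-\ell)=-v(\ell)$, the eigenvalue equation for $\ell\ge 1$ reads $c\,v(\ell) = z_\ell v(\ell+1) + z_{\ell-1}v(\ell-1)$ with $v(0)=0$; in particular the $\ell=1$ row is $c\,v(1) = z_1 v(2)$. The crucial observation is that the single imaginary coupling $z_0 = \hat q(0)\hat q(1)$ (imaginary because $\hat q(0)^2 = \tfrac12\,(k^2-1)/(k^2+1/\Bu) < 0$ for $|k|<1$) drops out entirely, since it only ever multiplies $v(0)=0$. Thus the skew branch is governed by the semi-infinite, real symmetric, tridiagonal operator with zero diagonal and strictly positive, bounded off-diagonal entries $z_1,z_2,z_3,\dots$ (indeed $z_\ell\to 1/2$). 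Such an operator is Hermitian and bounded, hence self-adjoint on $\ell^2(\bbn)$ with purely real spectrum, so a non-real $c$ admits no skew eigenvector and $V_c^- = 0$.

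Next I would bound the symmetric branch. Here the row-$0$ equation is $c\,v(0) = 2z_0 v(1)$ and the rows $\ell\ge 1$ are $c\,v(\ell) = z_\ell v(\ell+1) + z_{\ell-1}v(\ell-1)$. Since every $z_\ell\neq 0$ in the relevant range $0<|k|<1$, the value $v(0)$ forces $v(1) = c\,v(0)/(2z_0)$, and each subsequent row then solves uniquely for $v(\ell+1)$ in terms of $v(\ell),v(\ell-1)$; in particular $v(0)=0$ propagates to $v\equiv 0$. Hence the evaluation map $v\mapsto v(0)$ is injective on $V_c^+$, giving $\dim V_c^+\le 1$. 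Combining the two steps, $V_c = V_c^+$ has dimension at most one; as $c$ is assumed to be an eigenvalue, $V_c\neq 0$, so it is exactly one-dimensional and spanned by a single $\sigma$-symmetric vector. Finally, the eigenvalues of Equation \ref{cosine problem} coincide with those of $B$ through the diagonal change of variables $\hat g(\ell) = (\ell^2+k^2-1)^{1/2}(\ell^2+k^2+1/\Bu)^{1/2}\hat f(\ell)$, which is even in $\ell$ and therefore commutes with $\sigma$; since a non-real $c$ has no critical layer the associated $f$ is smooth and this change of variables preserves square-summability, so the conclusion transfers to the eigenspace of $B$.

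The main obstacle is the skew branch: the whole argument hinges on recognizing that the lone imaginary entry $z_0$ is invisible to $\sigma$-skew vectors, which is precisely what renders that branch self-adjoint and forces any non-real eigenvalue onto the (non-self-adjoint) symmetric branch. The symmetric bound, by contrast, is a routine consequence of the three-term recurrence once one checks that $z_\ell\neq 0$ throughout $0<|k|<1$.
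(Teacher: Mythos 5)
Your proof is correct and rests on essentially the same idea as the paper's: both arguments hinge on the observation that any eigenvector data vanishing at index $0$ reduces to the semi-infinite \emph{Hermitian} Jacobi matrix $J$ with off-diagonal entries $z_1,z_2,\dots$, whose spectrum is real, so a non-real eigenvalue forces $v(0)\neq 0$ and everything else follows. The only differences are organizational — the paper proves one-dimensionality by taking a linear combination of two putative eigenvectors to manufacture one with $v(0)=0$, whereas you split off the $\sigma$-skew part first and bound the symmetric part by uniqueness of the three-term recursion; you are also somewhat more careful than the paper about why the weight $w(\ell)$ preserves square-summability when passing from $\hat f$ to $\hat g$.
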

\begin{proof}
Suppose that $c$ is a non-real eigenvalue of Equation \ref{cosine problem}, and let $\vec v = [\dots, v(-1),v(0),v(1),\dots]$ be an eigenvector of $B$ with this eigenvalue.  If $v(0) = 0$, then $[v(1),v(2),v(3),\dots]$ is an eigenvector with eigenvalue $c$ of the infinite Hermitian Jacobi matrix
$$J = \left[\begin{array}{ccccccc}
0     &  z_{1} & 0   &  0  & \dots\\
z_{1} &   0    & z_2 &  0  & \dots\\
0     &  z_2   &  0  & z_3 & \dots\\
0     &   0    & z_3 &  0  & \dots\\
\vdots &  \vdots  & \vdots & \vdots &  \ddots\\
\end{array}\right].$$
However, this implies that $c$ is real, which is a contradiction.  Therefore $v(0)\neq 0$.  If the eigenspace with eigenvalue $c$ contains more than one linearly independent vector, then by taking an appropriate linear combination we can obtain a nonzero eigenvector with $v(0)=0$, which again leads to a contradiction.  Therefore the eigenspace of $c$ must be one-dimensional.  Since the eigenspace must contain a $\sigma$-symmetric or $\sigma$-skew symmetric vector, and $\sigma$-skew symmetric vectors satisfy $v(0)=0$, we also see that the eigenspace has a $\sigma$-symmetric eigenvector.
\end{proof}

\begin{lem}
Let $\mu$ be the measure associated with the Hermitian Jacobi matrix $A$.  Then the support of the absolutely continuous component of $\mu$ is $[0,1]$.
\end{lem}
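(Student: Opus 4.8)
The plan is to identify the limiting ``free'' Jacobi matrix governing $A$, obtain one inclusion from the growth-rate theorem, and obtain the reverse inclusion from a trace-class perturbation argument. First I would compute $\lim_{j\to\infty} z_j$. The numerator and denominator inside the square root defining $z_j$ are products of factors of the form $j^2 + k^2 - 1$ and $j^2 + k^2 + 1/\Bu$ (and the same with $j+1$), so each ratio tends to $1$ and hence $z_j \to \tfrac12$. Consequently the coefficients of $A$ satisfy
\[
\lim_{n\to\infty} a_n = \lim_{n\to\infty} z_{2n+1}z_{2n+2} = \tfrac14, \qquad \lim_{n\to\infty} b_n = \lim_{n\to\infty}(z_{2n}^2+z_{2n+1}^2) = \tfrac12.
\]
Thus $A$ meets the hypotheses of Theorem \ref{growth rate} with $r = \tfrac14$ and $x_0 = \tfrac12$, and the remark following that theorem gives immediately that the support of the absolutely continuous part of $\mu$ is contained in $[x_0 - 2r,\,x_0+2r] = [0,1]$. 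This settles one inclusion.

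For the reverse inclusion I would show that the absolutely continuous spectrum fills all of $[0,1]$; this is the crux, since containment in an interval does not by itself force the a.c.\ part to occupy the whole interval (a priori the measure could be purely singular there). The key is a rate of convergence. Expanding the rational function defining $z_j^2$ gives
\[
z_j^2 = \tfrac14\left(1 - \tfrac{1+1/\Bu}{j^2}+O(j^{-4})\right)\left(1 - \tfrac{1+1/\Bu}{(j+1)^2}+O(j^{-4})\right),
\]
so that $|z_j - \tfrac12| = O(j^{-2})$, and therefore $|a_n - \tfrac14| = O(n^{-2})$ and $|b_n - \tfrac12| = O(n^{-2})$. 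In particular the perturbation is summable: $\sum_n\big(|a_n - \tfrac14| + |b_n - \tfrac12|\big) < \infty$.

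Let $J_0$ be the free Jacobi matrix on $\ell^2(\bbn)$ with constant diagonal $\tfrac12$ and constant off-diagonal $\tfrac14$; it is purely absolutely continuous with spectrum $[0,1]$ (its orthogonal polynomials are affinely rescaled Chebyshev polynomials of the second kind). Writing $A - J_0$ as a diagonal operator together with a super- and a sub-diagonal operator, the summability estimate above shows that each of these pieces has finite trace norm, so $A - J_0$ is trace class. By the Kato--Rosenblum theorem the absolutely continuous parts of $A$ and $J_0$ are unitarily equivalent, whence $\sigma_{\mathrm{ac}}(A) = \sigma_{\mathrm{ac}}(J_0) = [0,1]$. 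Since $A$ is unitarily equivalent to multiplication by $x$ on $L^2(d\mu)$ with cyclic vector $\delta_0$, the support of the a.c.\ component of $\mu$ coincides with $\sigma_{\mathrm{ac}}(A)$ (the essential support and the support agree here, the set being a closed interval), and the desired equality follows.

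The main obstacle is exactly this reverse inclusion: the in-paper growth-rate result yields only the containment $\supp(\mu)_{\mathrm{ac}} \subseteq [0,1]$, and to pin down that the a.c.\ measure genuinely occupies all of $[0,1]$ one must invoke stability of the absolutely continuous spectrum under trace-class perturbations. The only genuine computation is the elementary but careful $O(j^{-2})$ expansion that certifies the trace-class bound; with that in hand, everything else is a direct application of the cited spectral theory.
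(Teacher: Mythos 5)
Your proof is correct, but it takes a genuinely different route from the paper. The paper never compares $A$ to a free Jacobi matrix directly: it uses the structural identities $\sigma_c(A)=\sigma_c(\wt B^2)=\sigma_c(\wt B)^2$ and then argues that $\wt B$ differs from the Chebyshev operator $C$ (zero diagonal, off-diagonal entries $1/2$) by a \emph{compact} operator, concluding $\sigma_c(\wt B)=\sigma_c(C)=[-1,1]$ and hence $\sigma_c(A)=[0,1]$. Your argument instead works with $A$ itself: the $O(j^{-2})$ expansion of $z_j$ gives summable perturbations $|a_n-\tfrac14|$, $|b_n-\tfrac12|$, so $A-J_0$ is trace class for the free matrix $J_0$ with diagonal $\tfrac12$ and off-diagonal $\tfrac14$, and Kato--Rosenblum yields $\sigma_{\mathrm{ac}}(A)=\sigma_{\mathrm{ac}}(J_0)=[0,1]$. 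What your route buys is rigor at precisely the point where the paper is weakest: compact perturbations preserve only the \emph{essential} spectrum (Weyl's theorem), and by the Weyl--von Neumann theorem an arbitrarily small compact (even Hilbert--Schmidt) perturbation can destroy absolutely continuous spectrum entirely, so the paper's step ``$\wt B = C+K$ with $K$ compact, hence $\sigma_c(\wt B)=\sigma_c(C)$'' does not by itself justify a statement about the a.c.\ component of $\mu$. Your trace-class argument closes exactly this gap, at the modest cost of the quantitative decay estimate (which in fact makes your first inclusion via the growth-rate theorem redundant --- the Kato--Rosenblum step already gives equality, not just containment). One cosmetic remark: for $|k|<1$ the entry $z_0$ is purely imaginary, so it is worth noting that this affects only the single entry $b_0=2z_0^2+z_1^2$ (which is still real), and the asymptotics you use involve only $z_j$ with $j\geq 1$, which are real; with that observation, $A$ is indeed a bounded self-adjoint operator and your application of Kato--Rosenblum is legitimate.
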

\begin{proof}
The support of the absolutely continuous component of $\mu$ is equal $\sigma_c(A)$.  Furthermore $\sigma_c(A)=\sigma_c(\wt B^2)$ and $\sigma_c(\wt B^2)=\sigma_c(\wt B)^2$.  Moreover, $\wt B$ differs from the Chebyshev operator
$$C = \left[\begin{array}{ccccccc}
0     & 1/2    & 0   &  0  & \dots\\
1/2   &   0    & 1/2 &  0  & \dots\\
0     & 1/2    &  0  & 1/2 & \dots\\
0     &   0    & 1/2 &  0  & \dots\\
\vdots &  \vdots  & \vdots & \vdots &  \ddots\\
\end{array}\right].$$
by a compact operator $K$, ie. $\wt B = C+K$.  Hence $\sigma_c(\wt B)=\sigma_c(C) = [-1,1]$, and it follows that $\sigma_c(A) = \sigma_c(\wt B)^2 = [0,1]$.
\end{proof}

\begin{prop}\label{Jacobi connection}
Let $c\in\bbc$ be non-real.  Then $c$ is an eigenvalue of Equation \ref{Rayleigh equation} if and only if $c$ is an element of the discrete spectrum of $A$.
\end{prop}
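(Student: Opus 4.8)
The plan is to prove the biconditional by chaining together the equivalences already assembled in this section, thereby reducing the existence of a square-integrable periodic solution of Rayleigh's equation to discrete-spectrum membership for the Hermitian Jacobi matrix $A$, and then to detect that membership through the $\ell^2$-summability criterion $\sigma_d(A)=\{\lambda:\sum_n|p_n(\lambda)|^2<\infty\}$ furnished by the proposition on bounded infinite Jacobi matrices. Explicitly, I would argue that $c$ is a non-real Rayleigh eigenvalue $\iff$ the coefficient problem \ref{cosine problem} has an $\ell^2(\bbz)$ solution $\iff$ $B$ has an $\ell^2$ eigenvector at $c$ $\iff$ (by the preceding Lemma) $B$ has a one-dimensional, $\sigma$-symmetric $c$-eigenspace whose nonnegative-index half is an $\ell^2(\bbn)$ eigenvector of $\wt B$ at $c$ $\iff$ the odd-index reduction of that vector is an $\ell^2(\bbn)$ eigenvector of $A$ at the associated spectral point $c^2$ $\iff$ that point lies in $\sigma_d(A)$. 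Since every arrow in this chain is reversible, it suffices to verify the two endpoints carefully.

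In the forward direction, I would begin with a Rayleigh eigenfunction $f\in\sheaf D(L_c)\subseteq L^2([0,2\pi])$. Its Fourier coefficients are square-summable and satisfy \ref{cosine problem}; because $f''\in L^2$ as well, the weighted vector $\hat g(\ell)=(\ell^2+k^2-1)^{1/2}(\ell^2+k^2+1/\Bu)^{1/2}\hat f(\ell)$ remains in $\ell^2$, and by \ref{modified cosine problem} it is an $\ell^2$ eigenvector of $B$ at $c$. The Lemma forces this vector to be $\sigma$-symmetric with $v(0)\neq0$, so its nonnegative half is an $\ell^2$ eigenvector of $\wt B$, and the checkerboard structure of $\wt B^2$ isolates an $\ell^2$ eigenvector of $A$ at $c^2$. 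Comparing against the Jacobi eigenvector $[p_0(c^2),p_1(c^2),\dots]^T$ gives $\sum_n|p_n(c^2)|^2<\infty$, placing the corresponding point of $\sigma(A)$ in the discrete spectrum.

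For the converse, I would start from a discrete eigenvalue of $A$ and its $\ell^2$ eigenvector. Since $A$ is Hermitian this eigenvalue is real, and since $c$ is non-real it must be strictly negative, hence outside the continuous spectrum $[0,1]$ identified above. Theorem \ref{growth rate} then forces $[p_0,p_1,\dots]^T$ to decay exponentially, so reversing the reduction produces an exponentially decaying, $\sigma$-symmetric $\ell^2$ eigenvector $\hat g$ of $B$ at $c$; dividing by the polynomially growing weight recovers a rapidly decaying $\hat f$ whose Fourier series is a smooth periodic $f\in\sheaf D(L_c)$ solving Rayleigh's equation. Thus $c$ is a non-real Rayleigh eigenvalue.

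The main obstacle is controlling square-integrability across the weight $\hat g(\ell)=(\ell^2+k^2-1)^{1/2}(\ell^2+k^2+1/\Bu)^{1/2}\hat f(\ell)$, which grows like $\ell^2$, so that neither $\ell^2$-membership of $\hat f$ nor of $\hat g$ automatically implies the other. In the forward direction the extra decay of $\hat f$ comes from $f$ solving a second-order ODE, which puts $\ell^2\hat f(\ell)$ in $\ell^2$; in the backward direction it comes from the exponential decay supplied by Theorem \ref{growth rate} at the spectral point outside $[0,1]$, which dominates the weight and, crucially, also certifies that the reconstructed $f$ lies in $\sheaf D(L_c)$ rather than merely in $L^2$. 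A secondary point requiring care is the bookkeeping through the non-Hermitian $\wt B$ (the asymmetric factor $2z_0$ in its top row) together with the exclusion of the skew-symmetric branch, where $v(0)=0$ would force $c$ real; both are handled by the preceding Lemma.
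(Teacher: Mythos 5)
Your overall architecture matches the paper's --- reduce to the bi-infinite matrix $B$, use the Lemma to force a $\sigma$-symmetric eigenvector with $v(0)\neq 0$, pass to $A$ through the checkerboard structure of $\wt B^2$, identify any $\ell^2$ eigenvector of $A$ with $[p_0(c^2),p_1(c^2),\dots]^T$, and decide membership in $\sigma_d(A)$ via the summability criterion --- but your implementation of the direction ``Rayleigh eigenvalue $\Rightarrow$ discrete spectrum'' is genuinely different from, and more direct than, the paper's. The paper never proves that the weighted vector $\wh g(\ell)=w(\ell)\wh f(\ell)$ lies in $\ell^2$: it uses only $\wh f\in\ell^2(\bbz)$ to conclude $[p_j(c^2)/w(j)]_j\in\ell^2(\bbn)$, and then argues by contradiction, noting that if $c^2\notin\sigma_d(A)$ then $c^2$ (not being a positive real) lies outside $\sigma(A)\supseteq\supp(\mu)$, whence Theorem \ref{growth rate} makes $p_n(c^2)$ grow exponentially while $w(n)$ grows only polynomially. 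You instead observe that for non-real $c$ the potential $\bigl((u^b)''-c/\Bu\bigr)/(u^b-c)$ is bounded, so $f''\in L^2$, hence $\ell^2\wh f(\ell)\in\ell^2$ and $\wh g\in\ell^2$; this exhibits $c^2$ as an honest $\ell^2$-eigenvalue of $A$ and lets the criterion $\sigma_d(A)=\{\lambda:\sum_n|p_n(\lambda)|^2<\infty\}$ apply directly. That buys a constructive argument where the paper has a proof by contradiction, and it makes explicit a regularity step (the passage from $\wh f\in\ell^2$ to $\wh g\in\ell^2$) that the paper deliberately routes around.

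The one genuine flaw is in your converse. You invoke Theorem \ref{growth rate} to obtain exponential decay of $[p_0(c^2),p_1(c^2),\dots]^T$, but that theorem is stated only for $z\notin\supp(\mu)$, and a point of the discrete spectrum is an atom of $\mu$, hence \emph{in} $\supp(\mu)$; the citation fails precisely where you lean on it. (The paper's own use of Theorem \ref{growth rate} is legitimate because there the hypothesis $c^2\notin\sigma_d(A)$ puts $c^2$ outside $\sigma(A)$ altogether.) Fortunately the exponential decay is not needed: the $\ell^2$ eigenvector of $A$, interleaved with the bounded coefficients $z_j/c$ into the eigenvector $u$ of $B$ as in the paper, remains in $\ell^2(\bbz)$, and dividing by the weight $w(\ell)\sim\ell^2$ already gives $\ell^2\wh f(\ell)\in\ell^2$, i.e.\ $f,f''\in L^2$; since $u^b-c$ never vanishes for non-real $c$, the equation itself then bootstraps smoothness and membership in $\sheaf D(L_c)$. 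Alternatively, if you insist on exponential decay, derive it from Poincar\'e--Perron asymptotics for the three-term recursion at the point $c^2\notin[0,1]$ (an $\ell^2$ solution must be the subordinate one), not from Theorem \ref{growth rate} as stated. With that repair, your proof is complete; note also that, as in the paper's own proof, the spectral point in question is $c^2$ rather than $c$, so the proposition's statement should be read accordingly.
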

\begin{proof}
Suppose $c^2\in\sigma_d(A)$.  Then there exists $v\in\ell^2(\bbn)$ with $\vec v = [v(0),v(1),v(2),\dots]^T$ an eigenvector of $J$ with eigenvalue $c$.  Define $v(-j) = v(j)$ and 
$$u(j)
= \left\lbrace\begin{array}{cc}
v((j-1)/2)                              & \text{$j\geq 0$ odd}\\
(z_j/c)v(j/2)+(z_{j-1}/c)v(j/2-1) & \text{$j\geq 0$ even}\\
u(-j) & j < 0
\end{array}\right.$$
Then $[\dots,u(-2),u(-1),u(0),u(1),\dots]^T$ is an eigenvector of $B$.  Moreover this extends $v$ to an element of $\ell^2(\bbz)$.  It follows that $\wh f(\ell)$ satisfies Equation \ref{cosine problem} for $\wh f(0) = 0$ and for $\ell\neq 0$,
$$\wh f(\ell) = \frac{u(\ell)}{(\ell^2+k^2-1)^{1/2}(\ell^2 + k^2+1/\Bu)^{1/2}}.$$
Moreover, since $\wh f(\ell)$ is the product of two functions in $\ell^2(\bbz)$ we have that $\wh f(\ell)$ is in $\ell^2(\bbz)$.

To prove the converse, suppose that $c$ is an eigenvalue of Equation \ref{Rayleigh equation}, and let $\wh f(\ell)$ be the Fourier coefficients of the associated eigenfunction $f(y)$.  Then $c^2$ is an eigenvalue of the Jacobi matrix $A$, whose eigenvector is $[v(1),v(3),v(5),\dots]^T$ for $v(j) = \wh f(j)w(j)$ where
$$w(j) = (\ell^2+k^2-1)^{1/2}(\ell^2 + k^2 + 1/\Bu)^{1/2}.$$
However, the eigenspace of $A$ with eigenvalue $c^2$ is exactly $[p_0(c^2),p_1(c^2),p_2(c^2),\dots]$
where $p_0(x),p_1(x),p_2(x),\dots$ is the sequence of polynomials associated to $J$.  This implies that for some constant $K$
$$\wh f(2j+1)w(j) = Kp_{j}(c^2),\ \ \text{for}\ j\geq 0$$
and therefore that $[p_0(c^2)/w(0),p_1(c^2)/w(1),\dots]\in\ell^2(\bbn)$.
The continuous support of the measure $\mu$ associated to $A$ is contained on the positive real axis.  Therefore if $c^2\notin\sigma_d(A)$ then since $c^2$ is not a positive real number $c^2\notin\sigma(A)$.  The growth rate of $p_n(x)$ for $x\notin\supp(\mu)$ is exponential in $n$ by Theorem \ref{growth rate}, and since $w(n)$ has polynomial growth, this contradicts the possibility that $[p_0(c^2)/w(0), p_1(c^2)/w(1),\dots]$ is in $\ell^2(\bbn)$.  This completes the proof.
\end{proof}

\begin{prop}
Let $A$ be as above, and let $p_0(x),p_1(x),\dots$ be the sequence of polynomials associated to $A$, and let $r_{n1} < r_{n2} < \dots < r_{nn}$ be the roots of $p_n(x)$.  If $|k|<1$, then for all $n$ large enough, the polynomial $p_n(x)$ has exactly one negative root $r_{n1}$.  The spectrum of $A$ has exactly one negative value $r$, where $r$ is the limit of the monotone decreasing sequence $r_{11},r_{21},r_{31},\dots$.
\end{prop}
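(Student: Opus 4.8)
The plan is to read everything off the finite truncations $A_n$ of $A$ (the $n\times n$ leading principal block), whose eigenvalues are exactly the roots $r_{n1}<\dots<r_{nn}$ of $p_n$. Since $A$ is a bounded, real symmetric operator and, by the preceding lemma, $\sigma_c(A)=[0,1]$ with $A$ a compact perturbation of the Chebyshev operator, every point of $\sigma(A)$ lying in $(-\infty,0)$ is an isolated eigenvalue of finite multiplicity and the set $\sigma(A)_\ell$ of limit points of $\sigma(A)$ is contained in $[0,1]$. The argument then splits into three parts: (i) the smallest roots $r_{n1}$ decrease to $\inf\sigma(A)$; (ii) no $p_n$ has two negative roots; (iii) some $p_n$ does have a negative root. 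Parts (i) and (ii) are short; part (iii) is the crux.

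For (i), Proposition \ref{root interlacing} applies because $\supp\mu\supseteq\sigma_c(A)=[0,1]$ is infinite, so each $p_n$ has $n$ distinct real roots and $r_{(n+1),1}<r_{n1}$; thus $r_{11}>r_{21}>r_{31}>\cdots$ is monotone decreasing and bounded below by $-\|A\|$. Writing $V_n=\mathrm{span}\{e_0,\dots,e_{n-1}\}$ and using $r_{n1}=\lambda_{\min}(A_n)=\min_{0\neq v\in V_n}\langle Av,v\rangle/\langle v,v\rangle$, monotonicity follows from $V_n\subset V_{n+1}$, and since $\bigcup_n V_n$ is dense and $A$ is bounded, the limit equals $\inf\sigma(A)$; set $r:=\lim_n r_{n1}=\inf\sigma(A)$. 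For (ii), suppose some $p_n$ with $n\geq 2$ had two negative roots; then its two smallest roots satisfy $r_{n1}<r_{n2}<0$, and by Corollary \ref{interlacing corollary} the interval $[r_{n1},r_{n2}]$ meets $\sigma(A)_\ell\subseteq[0,1]$, which is disjoint from $(-\infty,0)$, a contradiction. Hence each $p_n$ has at most one negative root, $A$ has at most one negative eigenvalue, and $r\leq 0$.

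It remains to prove (iii): for $|k|<1$ the operator $A$ is not positive, i.e. $r=\inf\sigma(A)<0$. Given (ii), this is equivalent to $\det A_n<0$ for large $n$, since an $A_n$ with at most one negative eigenvalue has exactly one precisely when $\det A_n=\prod_i r_{ni}<0$; equivalently, using $p_n(0)=(-1)^n\det(A_n)/(a_0\cdots a_{n-1})$ with $a_i>0$, precisely when $\mathrm{sign}\,p_n(0)=(-1)^{n+1}$. In terms of the $LDL^T$ pivots $d_0=b_0$, $d_{j+1}=b_{j+1}-a_j^2/d_j$, this says the pivot sequence must acquire exactly one negative term. The feature distinguishing the regime $|k|<1$ is that $z_0^2<0$, which enters only through $b_0=2z_0^2+z_1^2$ and pulls the diagonal down relative to the free Chebyshev operator; I would try to convert this into strict negativity of the quadratic form by exhibiting a finitely supported test vector $v$ with $\langle Av,v\rangle<0$, modeled on the decaying Weyl solution $v_j\approx\rho_+(\lambda)^j$ at a trial energy $\lambda<0$, where $\rho_+$ is the root of modulus $<1$ supplied by Theorem \ref{growth rate}.

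The hard part is exactly this existence step. Because $a_j\to 1/4$ and $b_j\to 1/2$, the pivot recursion has the \emph{degenerate} fixed point $d_*=1/4$, so its orbit approaches the band edge $0=\inf\sigma_{ess}(A)$ marginally rather than hyperbolically; the negative eigenvalue is only weakly bound and, as truncation experiments suggest, surfaces only at large $n$. Deciding whether the pivot orbit is eventually forced across $0$ for \emph{every} $|k|<1$ therefore seems to require either a delicate second-order analysis of $d_{j+1}=b_{j+1}-a_j^2/d_j$ near $d_*=1/4$, or a continuity-and-monotonicity argument in $k$ descending from the neutral case $k=1$ (where $z_0=0$ and no negative eigenvalue is present). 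I expect this to be the main obstacle, with the rest following formally: combining (i)–(iii), for all $n$ large enough $p_n$ has exactly one negative root $r_{n1}$, and these decrease to $r=\inf\sigma(A)<0$, which, being isolated below the essential spectrum $[0,1]$, is the unique negative point of $\sigma(A)$.
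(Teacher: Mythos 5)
There is a genuine gap, and it is the one you flagged yourself: step (iii) is never carried out. You do not prove that $\inf\sigma(A)<0$, i.e.\ that some $p_n$ actually acquires a negative root, for every $|k|<1$ (and every $\Bu$). Since the proposition asserts ``exactly one negative root,'' existence is half the claim, and it is the half that encodes the instability itself --- without it nothing rules out $\sigma(A)\subseteq[0,1]$ and the proposition (and with it the forward direction of Theorem \ref{main theorem}(a)) fails to be established. Nor is this step routine to complete along the lines you sketch: the obvious test vector $e_0$ gives $\langle Ae_0,e_0\rangle=b_0=2z_0^2+z_1^2$, which is negative for large $\Bu$ and moderate $k$, but becomes positive as $k\uparrow 1$ (where $z_0^2\uparrow 0$) and also for sufficiently small $\Bu$ (one checks $b_0>0$ near $k=0$ once $1/\Bu>8$); in those regimes the negative eigenvalue sits only marginally below the band edge $0$ of the essential spectrum, which is exactly the degenerate-fixed-point obstruction $d_*=1/4$ you describe. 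So as written, your argument proves ``at most one negative root'' and the convergence statement, but not the existence part of the proposition.

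In fairness, what you did prove is correct and is in fact cleaner than the paper's own route. For ``at most one,'' the paper argues: two negative roots $\Rightarrow$ (Corollary \ref{interlacing corollary}) a negative limit point of $\sigma(A)$ $\Rightarrow$ $\sigma_d(A)$ infinite $\Rightarrow$ infinitely many eigenvalues of Rayleigh's equation, which it then excludes via meromorphy of the Stieltjes transform together with the centered Howard semicircle theorem. You short-circuit all of this by noting that limit points of the spectrum of a bounded self-adjoint operator lie in the essential spectrum, which equals $[0,1]$ by Weyl's theorem on compact perturbations --- no detour back through the PDE. Likewise, your min--max identification $r_{n1}=\lambda_{\min}(A_n)\downarrow\inf\sigma(A)$ delivers monotonicity and identifies the limit in one stroke, where the paper invokes interlacing (Proposition \ref{root interlacing}) and the characterization of $\sigma(A)$ as limit points of roots. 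On the missing existence step, however, the paper is no better than you: its entire treatment is the sentence ``One may verify that $p_n(x)$ has a negative root for $n$ large enough,'' with no verification offered. You have thus correctly isolated the single step carrying the real mathematical content, but isolating it is not proving it; until that step is supplied (e.g.\ by a continued-fraction or carefully tuned test-function analysis near the band edge, uniform in $k$ and $\Bu$), the proposal is incomplete.
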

\begin{proof}
Fix $k$ with $|k|<1$.  Since $A$ is Hermitian, its spectrum is a subset of $\bbr$.  One may verify that $p_n(x)$ has a negative root for $n$ large enough.  If $p_n(x)$ has more than one negative root for some $n$, then by Corollary \ref{interlacing corollary}, then $\sigma(A)$ will have a negative limit point, and thus $\sigma(A)$ will have infinitely many negative elements.  Since the continuous part of the spectrum of $A$ is positive, this means that $\sigma_d(A)$ is infinite.  This implies that Rayleigh's equation has infinitely many eigenvalues $c$ for this value of $k$.

However, the number of eigenvalues of Rayleigh's equation is finite, as can be seen by the fact that the eigenvalues of $A$ correspond to the poles of the Stieltjes transform $m(z)$ of the measure $\mu$ associated to $A$.  Since the function $m(z)$ is meromorphic, its poles are discrete.  Therefore by the centered Howard's semicircle theorem, there are finitely many of them.  Alternatively for $\Bu=\infty$, one may use the result of Howard that the number of unstable modes is bounded by the number of inflection points of the background profile $u^b(y)$ \cite{howard1964number}.  Therefore $p_n(x)$ has at most one negative eigenvalue for each $n$.  Thus $p_n(x)$ has exactly one negative root $r_{n,1}$ for $n$ large enough, and by root interlacing (Proposition \ref{root interlacing}) $r_{n,1}$ is monotone decreasing.  This proves the proposition.
\end{proof}

We now turn to the proof of Theorem \ref{main theorem}, as stated in the introduction.  We have essentially proved it in the previous two propositions.
\begin{proof}[Proof of Theorem \ref{main theorem}]\mbox{}
\begin{enumerate}[(a)]
\item  If $|k|\geq 1$, then $B$ is a Hermitian matrix, and therefore the eigenvalues are all real.  Since the eigenvalues of $B$ determine the eigenvalues of Rayleigh's equation, this proves (a).
\item  This is a restatement of the conclusion of the previous proposition.
\item  This is a restatement of the conclusion of the previous proposition.
\item  This follows from Proposition \ref{Jacobi connection}.
\end{enumerate}
\end{proof}

\section{Numerical Results}
\begin{figure}[htp]
\begin{center}
\includegraphics[width=\linewidth]{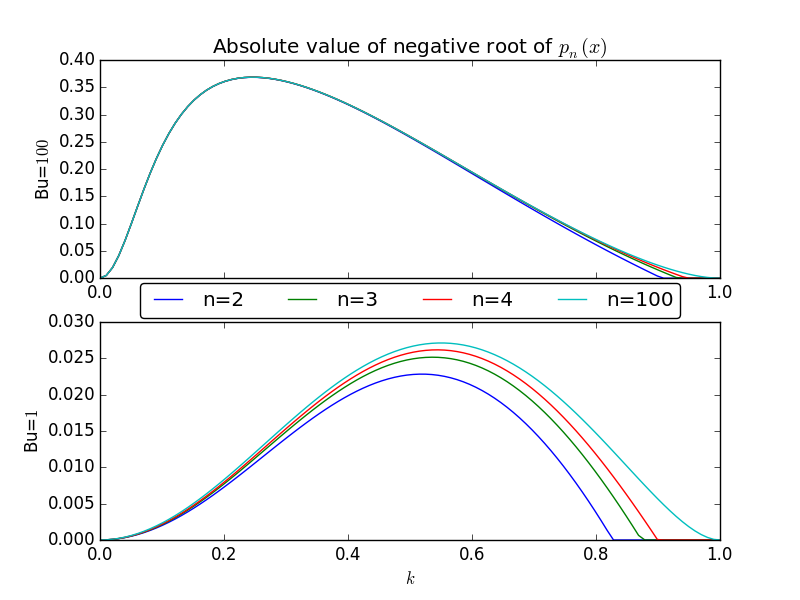}
\end{center}
\caption{The absolute value of the unique negative root $r_{n1}$ of $p_n(x)$ vs $k$ for various values of $n$ at Burger number $1$ and $100$.  As proved above, $-r_{n1}$ is monotone increasing.  Note that the rate of convergence is slower for smaller values of $\Bu$ and slower for larger wave number $k$.  Note that in the limit of large $n$, a negative root of $p_n(x)$ exists for all $0<k<1$}
\end{figure}
\begin{figure}[htp]
\begin{center}
\includegraphics[width=\linewidth]{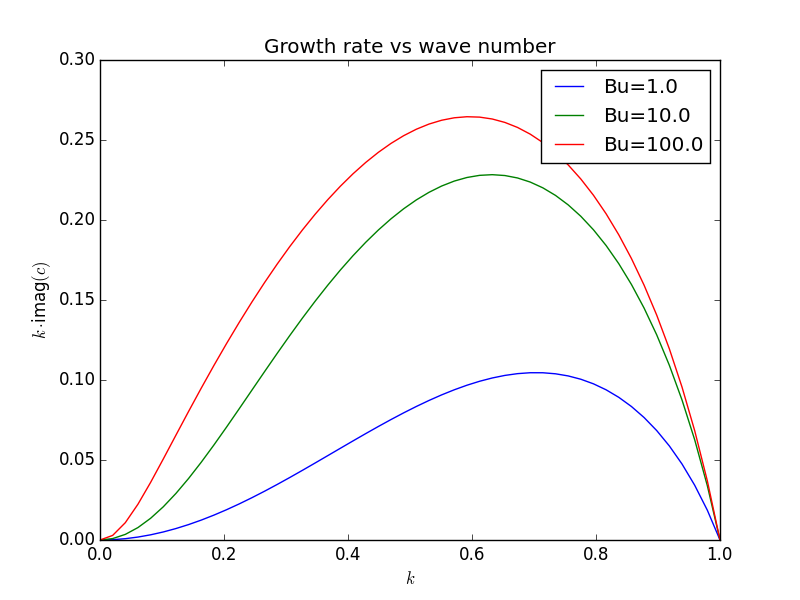}
\end{center}
\caption{The growth rate vs. wave number for various values of the Burger number.  Note that as the Burger number decreases, so too does the growth rate.}
\end{figure}
In this section we will provide a numerical verification of the result of Theorem \ref{main theorem}.  We will also provide a demonstration of the change in the behavior of the growth rate curve as a function of the Burger number $\Bu$.  The numerical calculations were carried out in python using numpy for the root calculations.
\section{Conclusions}
In this paper we explore the linear stability of the QG shallow water equation for a shear cosine profile with periodic boundary conditions, and derive bounds for the growth rate of the instabilities.  We relate the instabilities we see to the roots of a sequence of orthogonal polynomials, and our computational results verify our findings.

There are many unresolved questions that we would like to answer in the future, some of which we list here:
\begin{enumerate}[(a)]
\item  Is there an exact expression for the measure associated to our polynomials?
\item  Rayleigh's equation with the cosine profile $u^b(y)$ can be transformed into a Heun differential equation via the substitution $z = \cos(y)$, and therefore the associated Heun functions may be used as generating functions for our polynomials.  Do our polynomials comprise a ``nice" basis for the expansion of the associated Heun functions?
\item  Is there an analytic expression for growth rates?  Or else, can we establish estimates of the error in our approximation?
\end{enumerate}
We would also like to relate the linear instability we calculated here to the linear instability of the cosine profile for the full shallow water equations.

\section*{Acknowledgement}
The author is grateful for a summer research internship at Los Alamos National Laboratory, where part of the research for this paper was conducted.

\appendix
\section{Derivation of the Quasi-Geostrophic Shallow Water Equation}
\subsection{Shallow Water Equations}
In this paper, we consider the inviscid shallow water equations in a doubly periodic domain with a flat bottom.  The nondimensional form of the shallow water equations is given by
\begin{align}\label{shallow water equation}
\vec u_t + \vec u\cdot\grad \vec u = \frac{1}{\Ro}\vec u \times \hat{z} - \frac{1}{\Ro}\grad \eta\\
\eta_t  + \left(\frac{\Bu}{\Ro} + \eta\right)\grad \cdot \vec u + \vec u\cdot(\grad\eta) = 0
\end{align}
where $\vec u$ is the velocity field, $\eta$ is the free surface height, $\Ro = U/(fL)$ is the Rossby number, $\Fr = U/(gH)$ is the Froude number, $\Bu = (\Ro/\Fr)^2$ is the Burger number, $U$ is the characteristic magnitude of the velocity field, $H$ is the mean depth, $L$ is the characteristic length scale, $f$ is the Coriolis frequency, and $g$ is the (reduced) gravitational acceleration.

\subsection{Shallow Water QG Equation}\label{QG derivation}
In mid-latitude regions of the ocean or atmosphere at length scales relevent to geophysical flows, the Rossby number is typically small $\Ro \ll 1$.  In this situation, solutions to the shallow water equations are predominantly geostrophically balanced, eg. $\vec u\times\hat z = \grad\eta + \sheaf O(\Ro)$.  With this in mind, we next derive a geostrophically balanced model which approximates the shallow water equations in the limit of small Rossby number.

Consider a perturbation expansion in terms of the Rossby number $\Ro$ to obtain a balanced model for the evolution.  We expand the velocity and free surface height fields as
$$\vec u = \vec u^0 + \vec u^1\Ro + u^2\Ro^2 + \dots,$$
$$\eta = \eta^0 + \eta^1\Ro + \eta^2\Ro^2 + \dots.$$
Inserting this back into the shallow water equations and comparing similar powers of $\Ro$, we obtain two equations describing the leading balance for small Rossby number:
\begin{equation}\label{leading balance}
\grad\cdot \vec u^0 = 0,\ \ \ \grad\eta^0 = \vec u^0\times \hat z,
\end{equation}
as well as the following relations for all $i\geq 0$
\begin{align}\label{aux equations}
\vec u^i_t + \vec u^i\cdot\grad\vec u^i = \vec u^{i+1}\times \hat z - \grad\eta^{i+1},\\
\eta^i_t + \Bu\grad\cdot\vec u^{i+1} = 0.
\end{align}

To understand Equations (\ref{leading balance}) and (\ref{aux equations}) better, we introduce the following decomposition of the velocity field into a geostrophically balanced and imbalanced component:
$$\vec u^i = \vec u^i_0 + \vec u^i_1, \ \ \vec u^i_0 = \wh z\times \grad\eta^i,\ \vec u^i_1 = \vec u^i-\vec u^i_0.$$
Using this decomposition, Equations (\ref{leading balance}) simply say that $u^0_1 = 0$.  In other words, to leading order in $\Ro$ the velocity field is geostrophically balanced.

Next, note that
$$\vec u_1^i\times \wh z = (\vec u^i\times \wh z)-\grad\eta^i,$$
and therefore
\begin{equation}\label{evolution closed by ageostrophy}
\vec u^i_t + \vec u^i\cdot\grad\vec u^i = \vec u_1^{i+1}\times\wh{z}.
\end{equation}
Equation \ref{evolution closed by ageostrophy} shows that the time evolution of $\vec u^i$ is determined by $\vec u^i$ and the time evolution of the ageostrophic $\vec u^i_1$ field.

We next look at the field $\vec u^i_1$ more closely, examining in particular its divergence and curl.  Combining Equation \ref{evolution closed by ageostrophy} with the last of the four equations, we obtain the Helmholtz decomposition for $\vec u^{i+1}_1\times \wh{z}$:
\begin{equation}\label{helmholtz decomposition}
\vec u^{i+1}_1\times\wh{z} = -\grad p^i + \frac{\Delta^{-1}}{\Bu}(\grad \eta_t^i\times\wh{z}).
\end{equation}
where here $p^i$ is a pressure term given by
$$p^i =\frac{\Delta^{-1}}{\Bu}\eta_{tt}^{i-1} - \Delta^{-1}(\grad\cdot((\vec u^i\cdot\grad)\vec u^i)),$$
where $\eta_{tt}^{-1} := 0$.  Thus if we know $\eta^i$ and $\vec u^i$ for $i\leq n$, then we may obtain $\vec u^{n+1}_1$.

In the special case that $n=0$, Equation \ref{leading balance}, allows us to determine $\eta^0$ from $\vec u^0$.  Hence by combining Equation (\ref{helmholtz decomposition}) and Equation (\ref{evolution closed by ageostrophy}), we obtain a closed equation for the time evolution of $\vec u^0$, which we refer to as the shallow water quasi-geostrophic equation
\begin{align}\label{shallow water qg equation}
\left(1 - \frac{\Delta^{-1}}{\Bu}\right)\vec u^0_t + \vec u^0\cdot\grad\vec u^0 = -\grad p^0\\
p^0 = - \Delta^{-1}(\grad\cdot((\vec u^0\cdot\grad)\vec u^0)).
\end{align}
Note that by taking the curl of the above equation, we obtain the usual potential vorticity form of the QG shallow water equation
$$q^0_t + \vec u^0\cdot\grad q^0 = 0$$
$$q^0 = (\Delta-1/\Bu)\psi,\ \ \vec u^0 = -\grad\psi\times\hat z.$$

\section{Shear Instabilities and Rayleigh's Equation}\label{Rayleigh derivation}
The quasi-geostrophic shallow water (QG) equation describes the motion of a vertically homogeneous fluid in a rotating reference frame for fixed Burger number $\Bu$ in the limit of small Rossby number $\Ro$.  The QG equation is given by
\begin{equation}\label{qg equation}
q_t + J(\psi,q) = 0
\end{equation}
where here $q = q(x,y,t)$ is the potential vorticity, $\psi = \psi(x,y,t)$ is the stream function, and $J$ is the Jacobian
$$J(\psi,q) = \psi_xq_y - \psi_yq_x.$$  
The potential vorticity and the stream function are related to each other by
$$q = \grad^2\psi - \psi/\Bu.$$

We will consider the linear stability of solutions to the QG equation in a domain satisfying (normalized) periodic boundary conditions in the $y$ direction
$$\psi(x,y+2\pi) = \psi(x,y).$$
To obtain an equation for the linear stability of a given solution $\psi = \psi^b$ of \ref{qg equation}, we consider a solution of \ref{qg equation} of the form $\psi = \psi^b + \psi^p$, where $\psi^b$ is a base state solution and $\psi^p$ is a perturbation.  Inserting this back into the equation and ignoring quadratic terms in the perturbation, we obtain the linearized QG shallow water equation
$$q_t^p + J(\psi^b,q^p) + J(\psi^p,q^b) = 0.$$

In the case of a shear instabilities, the background solution is homogeneous in one of the directions, which we take to be the $x$-direction.  Then the background state is $\psi^b = -\int u^b(y) dy$ for some background velocity profile $u^b(y)$.  Then the linear equation reduces to
$$q_t^p + (q^b)'\psi^p_x - (\psi^b)'q^p_x = 0.$$
Rewriting this in terms of stream functions only, this says:
$$\left(\Delta-\frac{1}{\Bu}\right)\psi_t^p + (\psi^b)'''\psi^p_x - (\psi^b)'\Delta\psi^p_x = 0.$$
Since the coefficients of the above differential equation are constant in $x$ and $t$, it makes sense to look for solutions of the form
$$\psi^p(x,y,t) = e^{ik(x - c t)}f(y),$$
for some unknown function $f(y)$.  Substituting this in, we obtain a modified form of Rayleigh's Equation \ref{Rayleigh equation}:
$$f''(y)-\left(k^2 + \frac{u^b(y)''- c/\Bu}{u^b(y)-c}\right)f(y) = 0.$$
This differs from the usual Rayleigh equation in the inclusion of the $\Bu$ term.  As $\Bu$ increases however, this results in the usual form of Rayleigh's equation.

\section*{Acknowledgement}

\bibliographystyle{plain}
\bibliography{jacobi}

\end{document}